\newtheorem{theorem}{Theorem}%[section]
\newtheorem{proposition}[theorem]{Proposition}
\newtheorem{lemma}[theorem]{Lemma}
\newtheorem{corollary}[theorem]{Corollary}
\newtheorem{definition}[theorem]{Definition}
\newtheorem{notation}[theorem]{Notation}
\newtheorem{example}[theorem]{Example}
\newtheorem{remark}[theorem]{Remark}
\newcommand{\ZZ}{\mathbb{Z}}%Integers
\newcommand{\NN}{\mathbb{N}}%Natural numbers
\newcommand{\FF}{\mathbb{F}}%Finite field
\newcommand{\BW}{\textup{BW}}
\newcommand{\wt}{\textup{wt}}
\newcommand{\dist}{\textup{dist}}
\newcommand\nc\newcommand
\nc\bfa{{\boldsymbol a}}\nc\bfA{{\bf A}}\nc\cA{{\mathcal A}}
\nc\bfb{{\boldsymbol b}}\nc\bfB{{\bf B}}\nc\cB{{\mathcal B}}
\nc\bfc{{\boldsymbol c}}\nc\bfC{{\bf C}}\nc\cC{{\mathcal C}}
\nc\bfd{{\boldsymbol d}}\nc\bfD{{\bf D}}\nc\cD{{\mathcal D}}
\nc\bfe{{\boldsymbol e}}\nc\bfE{{\bf E}}\nc\cE{{\mathcal E}}
\nc\bff{{\boldsymbol f}}\nc\bfF{{\bf F}}\nc\cF{{\mathcal F}}
\nc\bfg{{\boldsymbol g}}\nc\bfG{{\bf G}}\nc\cG{{\mathcal G}}
\nc\bfh{{\boldsymbol h}}\nc\bfH{{\bf H}}\nc\cH{{\mathcal H}}
\nc\bfi{{\boldsymbol i}}\nc\bfI{{\bf I}}\nc\cI{{\mathcal I}}
\nc\bfj{{\boldsymbol j}}\nc\bfJ{{\bf J}}\nc\cJ{{\mathcal J}}
\nc\bfk{{\boldsymbol k}}\nc\bfK{{\bf K}}\nc\cK{{\mathcal K}}
\nc\bfl{{\boldsymbol l}}\nc\bfL{{\bf L}}\nc\cL{{\mathcal L}}
\nc\bfm{{\boldsymbol m}}\nc\bfM{{\bf M}}\nc\cM{{\mathcal M}}
\nc\bfn{{\boldsymbol n}}\nc\bfN{{\bf N}}\nc\cN{{\mathcal N}}
\nc\bfo{{\boldsymbol o}}\nc\bfO{{\bf O}}\nc\cO{{\mathcal O}}
\nc\bfp{{\boldsymbol p}}\nc\bfP{{\bf P}}\nc\cP{{\mathcal P}}
\nc\bfq{{\boldsymbol q}}\nc\bfQ{{\bf Q}}\nc\cQ{{\mathcal Q}}
\nc\bfr{{\boldsymbol r}}\nc\bfR{{\bf R}}\nc\cR{{\mathcal R}}
\nc\bfs{{\boldsymbol s}}\nc\bfS{{\bf S}}\nc\cS{{\mathcal S}}
\nc\bft{{\boldsymbol t}}\nc\bfT{{\bf T}}\nc\cT{{\mathcal T}}
\nc\bfu{{\boldsymbol u}}\nc\bfU{{\bf U}}\nc\cU{{\mathcal U}}
\nc\bfv{{\boldsymbol v}}\nc\bfV{{\bf V}}\nc\cV{{\mathcal V}}
\nc\bfw{{\boldsymbol w}}\nc\bfW{{\bf W}}\nc\cW{{\mathcal W}}\nc\sW{{\mathscr W}}
\nc\bfx{{\boldsymbol x}}\nc\bfX{{\boldsymbol X}}\nc\cX{{\mathcal X}}\nc\sX{{\mathscr X}}
\nc\bfy{{\boldsymbol y}}\nc\bfY{{\boldsymbol Y}}\nc\cY{{\mathcal Y}}\nc\sY{{\mathscr Y}}
\nc\bfz{{\boldsymbol z}}\nc\bfZ{{\boldsymbol Z}}\nc\cZ{{\mathcal Z}}\nc\sZ{{\mathscr Z}}
\title{Extended Cyclic Codes Sandwiched Between Reed--Muller Codes}
\author{%
  \IEEEauthorblockN{Yan Xu, Changjiang Ji, Ran Tao, Sihuang Hu}
  \IEEEauthorblockA{
    \thanks{Y. Xu and C. Ji are with School of Cyber Science and Technology, Shandong University, Qingdao, Shandong, 266237, China. 
    R. Tao and S. Hu are with 
      School of Cyber Science and Technology,
      Shandong University, Qingdao, Shandong, 266237, China, and 
    Key Laboratory of Cryptologic Technology and Information Security, Ministry of Education,
      Shandong University, Qingdao, Shandong, 266237, China. 
      Research partially funded by National Key R\&D Program of China under Grant No. 2021YFA1001000,
      National Natural Science Foundation of China under Grant No. 12001322, 
      Shandong Provincial Natural Science Foundation under Grant No. ZR202010220025,
      and a Taishan scholar program of Shandong Province.
      This work was in part presented at ISIT 2021.
      Email: \{yanxu,jichangjiang\}@mail.sdu.edu.cn; \{rtao,husihuang\}@sdu.edu.cn}
    }
  }
\begin{document}

\maketitle

\begin{abstract}
  The famous Barnes--Wall lattices can be obtained by applying Construction D to a chain of Reed--Muller codes.
  By applying Construction ${\textup{D}}^{\textup{(cyc)}}$ to a chain of extended
  cyclic codes sandwiched between Reed--Muller codes, Hu and Nebe (J. London Math. Soc. (2) 101 (2020) 1068-1089)
  constructed new series of universally strongly perfect lattices sandwiched between
  Barnes--Wall lattices. In this paper, we first extend their construction to generalized Reed--Muller codes,
  and then explicitly determine the minimum vectors of those new sandwiched Reed--Muller codes for some special cases.
\end{abstract}

\section{Introduction}\label{sec_introduction}

Reed--Muller (RM) codes were introduced by Muller~\cite{muller1954application} in 1954,
and shortly after Reed~\cite{reed1954textordfemininea} developed a decoding algorithm. 
RM codes are among the oldest, simplest and perhaps most ubiquitous family of codes.
They are used in many areas related to coding theory, such as electrical engineering and computer science~\cite{MR4190995,MR3666961,massey1992deep,arikan2010survey}. 
Moreover, due to their favourable theoretical and mathematical properties, 
RM codes have also been extensively studied in theoretical computer science~\cite{MR3400278}.
Despite the simplicity of their construction,
many of their important properties are still under investigation~\cite{abbe2020reed}.

The minimum distance of $r$-th order binary RM codes $\cR(r,n)$ is $2^{n-r}$.
The set of minimum vectors (minimum weight codewords) of binary RM codes
was first described by Kasami et al.~\cite{kasami1968new}.
Delsarte, Goethals and MacWilliams~\cite{delsarte1970generalized} characterized
the set of minimum vectors of generalized RM codes. In 1998, Charpin~\cite{charpin1998open}
gave a new proof of the results of~\cite{delsarte1970generalized} for some particular case of
generalized RM codes using Newton identities.
In~\cite{leducq2012new} Leducq proved the same results using arguments from affine geometry.

The famous Barnes--Wall lattices $\BW _{2m}$ of dimension $2^{2m}$ (with $m\in \NN$) form an
important infinite family of even lattices~\cite{BarnesWall,BroueEnguehard}.
One way to construct $\BW_{2m}$ is applying Construction D~\cite[Chapter 8]{conway2013sphere} to a chain of RM codes.
Recently, Hu and Nebe~\cite{hu2020strongly} constructed new series of $2^{2m}$-dimensional
universally strongly perfect lattices sandwiched between Barnes--Wall lattices.
Those lattices can be obtained by applying Construction ${\textup{D}}^{\textup{(cyc)}}$~\cite[Definition 2.5]{hu2020strongly}
to a chain of extended cyclic codes sandwiched between binary RM codes.

In this paper, we first generalize the construction in~\cite{hu2020strongly} to give
new extended cyclic codes sandwiched between generalized RM codes. Then we
employ the tools developed by Augot~\cite{augot1993etude,augot1996description} and Charpin~\cite{charpin1998open} to explicitly
determine the minimum vectors of those new sandwiched RM codes for some special cases.
The paper is organized as follows.
In Section~\ref{sec_pre}, we briefly introduce some basic notations and relevant knowledge of
cyclic codes, Newton identities, affine polynomials and affine invariant codes.
Then we give the construction of extended cyclic codes sandwiched between RM codes
in Section~\ref{sec_sandwiched_codes}. In Section~\ref{sec_properties}, we present the dual code
and the minimum distance of those new sandwiched RM codes.
In Section~\ref{sec_minimum_vectors} we show how to find the minimum vectors of those codes.
Some concluding remarks are given in Section~\ref{sec_conclu}.

\section{Preliminaries}\label{sec_pre}
The following notation will be used throughout this paper.
Let $p$ be a prime, $l$ a positive integer, $q=p^l$, and $\FF_q$ the finite field of order $q$. Let $n$ be a positive integer, $N=q^n-1$, and $\FF_{q^n}$ the finite field of order $q^n$.
\subsection{Group algebra}
Let $\mathcal M$ be the group algebra $\FF_q[\FF_{q^n}^*,\times]$, 
where $\FF_{q^n}^*$ is the multiplicative group of $\FF_{q^n}$. 
An element of $\cM$ is a formal sum
$$x=\sum_{g\in \FF_{q^n}^*}x_g(g),\ x_g\in \FF_q.$$
The scalar multiplication, addition and multiplication in $\cM$ are given as
\begin{align*}
    \lambda\left( \sum_{g\in \FF_{q^n}^*}x_g(g)\right) &=\sum_{g\in \FF_{q^n}^*}\lambda x_g(g),\\
    \sum_{g\in \FF_{q^n}^*}x_g(g)+\sum_{g\in \FF_{q^n}^*}y_g(g)&=\sum_{g\in \FF_{q^n}^*}(x_g+y_g)(g),\\
    \bigg(\sum_{g\in \FF_{q^n}^*}x_g(g)\bigg)\bigg(\sum_{g\in \FF_{q^n}^*}y_g(g)\bigg)&=\sum_{g\in \FF_{q^n}^*}\Big( \sum_{hk=g\atop{h,k\in \FF_{q^n}^*}}x_hy_k \Big)(g).
\end{align*}
We define the following $\FF_q$-linear map from $\mathcal M$ into $\FF_{q^n}$:
$$\rho_s\bigg(\sum_{g\in {\FF_{q^n}^*}}x_g(g)\bigg)=\sum_{g\in {\FF_{q^n}^*}}x_g(g^s),\ s\in \ZZ_{\ge 0}.$$
Note that $\rho_s = \rho_{(s \mod N)}$ as $\FF_{q^n}^*$ is a multiplicative group of order $N$.

Let $\mathcal A$ be the group algebra $\FF_q[\FF_{q^n},+]$. An element of $\cA$ is a formal sum
$$x=\sum_{g\in \FF_{q^n}}x_g(g),\ x_g\in \FF_q.$$
The operations in $\cA$ are given as
\begin{align*}
    \lambda\left( \sum_{g\in \FF_{q^n}}x_g(g)\right) &=\sum_{g\in \FF_{q^n}}\lambda x_g(g),\\
    \sum_{g\in \FF_{q^n}}x_g(g)+\sum_{g\in \FF_{q^n}}y_g(g)&=\sum_{g\in \FF_{q^n}}(x_g+y_g)(g),\\
    \bigg(\sum_{g\in \FF_{q^n}}x_g(g)\bigg)\bigg(\sum_{g\in \FF_{q^n}}y_g(g)\bigg)&=\sum_{g\in \FF_{q^n}}\Big( \sum_{h+k=g\atop{h,k\in \FF_{q^n}}}x_hy_k \Big) (g).
\end{align*}
Note that the multiplication in $\cM$ and $\cA$ are distinct.

\subsection{Cyclic codes}
A linear code $\cC$ is \emph{cyclic} if any cyclic shift of a codeword is also a codeword, i.e., 
whenever $(c_0,...,c_{n-1})\in\cC$ then so is $(c_{n-1},c_0,...,c_{n-2})$.
Recall that $N=q^n-1$.
Set $[0,N-1]=\{0,1, \cdots, N-1\}$. Let $\alpha$ be a primitive element of $\FF_{q^n}$.
We associate with the vector $(c_0,c_1,...,c_{N-1})$ in $\FF_q^{N}$ the element $\sum_{i=0}^{N-1}c_i(\alpha^i)$ in $\cM$.
It is not hard to check that a (primitive) cyclic code of length $N$ over $\FF_q$ is just an ideal of $\mathcal M$. 
%In this case, $G^*=\FF_{q^n}^*$ and $G=\FF_{q^n}$. 
The \emph{defining set $T$} of $\mathcal C$ is the largest subset of the range $[0,N-1]$,
invariant under the multiplication by $q \pmod N$, such that any codeword $x\in \mathcal C$
satisfies $\rho_s(x)=0,\ \forall s\in T.$ The set $T$ is a union of $q$-cyclotomic cosets modulo $N$. 
Any $s\in T$ corresponds to a zero of $\cC$, say $\alpha^s$.
  The set $Z = \{\alpha^s\ | \ s \in T\}$ is the \emph{zero set} of $\cC$.
If $0\notin T$ we denote the \emph{extended code} of $\cC$ as
$$		\widehat{\cC} = \left\{ \bigg(-\sum_{g\in {\FF_{q^n}^*}}x_g\bigg)(0)+ \sum_{g\in {\FF_{q^n}^*}}x_g(g) ~\Bigg|~ \sum_{g\in {\FF_{q^n}^*}}x_g(g) \in \cC\right\}.$$
	By convention, the attached symbol is labelled by $0$, and the defining set of $\widehat{\cC}$ is $T\cup\{0\}$.

Let $x\in \mathcal C$. The \emph{Mattson--Solomon (MS) polynomial}~\cite{mattson1961new} of $x$ is defined by
\begin{equation*}
  M_x(X)=\sum_{s=0}^{N-1}\rho_{N-s}(x)X^s~,
\end{equation*}
whose coefficients are in $\FF_{q^n}$.
The \emph{support} of a codeword $x$ with Hamming weight $w$ is
$$supp(x)=\{g\in \FF_{q^n}^* \mid x_g\neq0\}=\{g_1, \ldots, g_w\}.$$
The \emph{locator polynomial} of $x$ is the polynomial over $\FF_{q^n}$ defined as
$$\sigma_x(X)=\prod_{i=1}^{w}(1-g_iX)=1+\sum_{j=1}^w\sigma_jX^j.$$
The coefficients $\sigma_j, 1\leqslant j\leqslant w$, are the elementary symmetric functions of the \emph{locators} $g_i$, $1\leqslant i\leqslant w$, that is,
$$\sigma_j=(-1)^j\sum_{1\leqslant i_1< i_2<\ldots<i_j\leqslant w}g_{i_1}g_{i_2}\cdots g_{i_j}.$$
For convenience we also write $\Lambda_{N-s}=\rho_{N-s}(x)$ for $0\leqslant s\leqslant N$ and note that $\Lambda_0=\Lambda_N$.

\subsection{Newton identities}
The Newton identities is a tool of great interest for describing a set of codewords, particularly for minimum vectors of cyclic codes~\cite{augot1993etude}\cite{augot1996description}\cite{charpin1998open}.

\begin{theorem}~\cite{augot1996description}\cite[Theorem 3.5]{charpin1998open}
	Let $x\in \mathcal{M}$ be a codeword of weight $w$.
	Let $\Lambda_1, \ldots, \Lambda_N$ be the coefficients of the MS polynomial of $x$ and denote by $\sigma_0, \ldots, \sigma_w$ the coefficients of the locator polynomial of $x$ (note that $\sigma_0 = 1$). Then the $\sigma_i$ and the $\Lambda_i$ are linked by the generalized Newton identities:
	\begin{align}\label{newton_identities}
		\forall\, j\geqslant0,\ \Lambda_{j+w}+\sigma_1\Lambda_{j+w-1}+\ldots+\sigma_w\Lambda_j=0.
	\end{align}
\end{theorem}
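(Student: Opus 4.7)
The plan is to derive the identity by viewing $\Lambda_j$ as a weighted power sum over the support of $x$, and then exploit the fact that each locator $g_i$ is a root of the ``reciprocal'' of $\sigma_x$.

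First I would use the definition of $\rho_j$ restricted to the support: since $x = \sum_{i=1}^{w} x_{g_i}(g_i)$ with $x_{g_i} \in \FF_q^*$ and $g_i \in \FF_{q^n}^*$, one obtains
\begin{equation*}
\Lambda_j \;=\; \rho_j(x) \;=\; \sum_{i=1}^{w} x_{g_i}\, g_i^{\,j}.
\end{equation*}
Next, from $\sigma_x(X) = \prod_{i=1}^{w}(1-g_i X)$ the value $1/g_i$ is a root of $\sigma_x$ for each $i$. Multiplying the equation $\sigma_x(1/g_i) = 0$ through by $g_i^{w}$ clears denominators and produces the polynomial identity
\begin{equation*}
g_i^{w} + \sigma_1 g_i^{w-1} + \cdots + \sigma_{w-1} g_i + \sigma_w \;=\; 0, \qquad 1 \le i \le w.
\end{equation*}

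The final step is to multiply this relation by $x_{g_i}\, g_i^{\,j}$ (legitimate because $g_i$ is a unit in $\FF_{q^n}^*$) and then sum over $i=1,\ldots,w$. Grouping terms by the power of $g_i$ and recognising each inner sum as one of the $\Lambda_{\bullet}$'s via the formula displayed above, the identity~\eqref{newton_identities} falls out immediately.

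I do not foresee a substantive obstacle: the statement is essentially the classical Newton identity for the power sums of the locators $g_i$, generalised from unweighted to weighted power sums via the codeword coefficients $x_{g_i} \in \FF_q^*$. Moreover, since the theorem only considers indices $j \ge 0$ (so that the shifted index $j+w$ is always at least $w$), we never encounter the lower Newton identities with their extra correction term $k\sigma_k$ for $k \le w$; the argument therefore reduces to the short algebraic manipulation sketched above.
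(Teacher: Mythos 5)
Your proof is correct. The paper does not actually supply its own proof of this theorem -- it cites \cite{augot1996description} and \cite[Theorem 3.5]{charpin1998open} -- and your argument is exactly the standard one found in those references: write $\Lambda_j = \rho_j(x) = \sum_{i=1}^{w} x_{g_i} g_i^{\,j}$, multiply the cleared-denominator relation $g_i^{w}+\sigma_1 g_i^{w-1}+\cdots+\sigma_w=0$ (which holds because $g_i^{-1}$ is a root of $\sigma_x$) by $x_{g_i} g_i^{\,j}$, and sum over $i$. One small remark: you should note that the indices in $\Lambda_{j+w-k}$ may exceed $N$, but since $g_i \in \FF_{q^n}^*$ we have $g_i^{N}=1$, so $\rho_s=\rho_{(s\bmod N)}$ and the formula $\Lambda_j=\sum_i x_{g_i}g_i^{\,j}$ is automatically $N$-periodic in $j$, which is exactly the convention the paper adopts. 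Also, the "legitimate because $g_i$ is a unit" parenthetical is harmless but unnecessary for $j\ge 0$; it would only matter if you wanted negative shifts. Your closing observation that $j\ge 0$ keeps you in the range $\ge w$ and so sidesteps the lower Newton identities with the $k\sigma_k$ correction term is exactly why the weighted argument goes through so cleanly.
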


%Let the field $\FF_{q^n}$ be the splitting field of $(X^N-1)$. 
We now consider the ring $\FF_{q^n}[\Lambda_0, \ldots, \Lambda_{N-1}, \sigma_1, \ldots, \sigma_w]$ 
and the algebraic system provided from a given cyclic code $\mathcal{C}$ by the Newton identities.
%Let $q = p^l$, where $p$ is the characteristic of the ambient space $\mathcal{M}$. 
Let $\mathcal{C}$ be a cyclic code in $\mathcal{M}$ with the defining set $T$. We define the system $\mathcal{S_C}(w)$, where the $\Lambda_i$ and the $\sigma_i$ are the indeterminates, as follows:
\begin{equation}\label{algebraic_system}
	\mathcal{S_C}(w)=	\left\{
	\begin{array}{l}
		\Lambda_{w+1}+\Lambda_w\sigma_1+\ldots+\Lambda_1\sigma_w=0~,\\
		\Lambda_{w+2}+\Lambda_{w+1}\sigma_1+\ldots+\Lambda_2\sigma_w=0~,\\
		\qquad\vdots\\
		\Lambda_{w+N}+\Lambda_{w+N-1}\sigma_1+\ldots+\Lambda_N\sigma_w=0~,\\
		\forall i\in [0,N-1]~,~~\Lambda_{qi~mod~N}=\Lambda_i^q~,\\
		\forall i\in [0,N-1]~,~~\Lambda_{i+N}=\Lambda_i~,\\
		\forall i\in T~,~~\Lambda_i=0~.
	\end{array} \right.
\end{equation}
The system $\mathcal{S_C}(w)$, consisting of $3N+|T|$ polynomials in $N+w$ variables over the field $\FF_{q^n}$, defines an ideal in the ring
$\FF_{q^n}[\Lambda_0, \ldots, \Lambda_{N-1}, \sigma_1, \ldots, \sigma_w]$.

\begin{corollary}~\cite[Corollary 3.8]{charpin1998open}
	Let $\mathcal{C}$ be a cyclic code in $\mathcal{M}$ whose minimum distance $\delta$ satisfies $\delta\geqslant w$. 
  Then the minimum distance of $\mathcal{C}$ is exactly $w$ if and only if the system $\mathcal{S_C}(w)$ has at least one solution $(\Lambda_0, \ldots, \Lambda_{N-1}, \sigma_1, \ldots, \sigma_w)\in\FF_{q^n}^{N+w}$. For any such solution the codeword $x$ whose MS polynomial is $\sum^{N-1}_{s=0}\Lambda_{N-s}X^s$, is a codeword of $\mathcal{C}$ of weight $w$. The $\sigma_i, 0\leqslant
  i\leqslant w$, are the coefficients of the locator polynomial of $x$. The number of codewords of weight $w$ in $\mathcal{C}$ is equal to the number of solutions of $\mathcal{S_C}(w)$.
\end{corollary}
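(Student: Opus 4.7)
The plan is to view $\mathcal{S_C}(w)$ as the algebraic shadow of the pair (weight-$w$ codeword $x$, locator polynomial $\sigma_x$) and pass back and forth via the Mattson--Solomon transform. For the ``only if'' direction, I would start with a codeword $x \in \cC$ of weight exactly $w$, enumerate $\supp(x) = \{g_1, \ldots, g_w\}$, and set $\Lambda_i := \rho_{N-i}(x)$ together with the signed elementary symmetric functions $\sigma_1, \ldots, \sigma_w$ of the $g_j$. I would then verify each block of~\eqref{algebraic_system}: the Newton-identity block is exactly the theorem quoted immediately before the corollary; the Frobenius relations $\Lambda_{qi \bmod N} = \Lambda_i^q$ come from expanding
\[
\rho_{N-qi}(x) = \sum_g x_g g^{-qi} = \Bigl(\sum_g x_g g^{-i}\Bigr)^q
\]
and invoking $x_g^q = x_g$; periodicity is automatic from $g^N = 1$ on $\FF_{q^n}^\ast$; and $\Lambda_i = 0$ for $i \in T$ is built into the definition of the defining set.

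For the ``if'' direction, given a solution $(\Lambda_i, \sigma_j)$ I would recover a candidate codeword $x \in \cM$ by the inverse Mattson--Solomon transform, so that $\rho_{N-i}(x) = \Lambda_i$. The Frobenius block forces each $x_g$ into $\FF_q$, and the vanishing of $\Lambda_i$ on $T$ places $x$ inside $\cC$. The delicate step, which I expect to be the main obstacle, is concluding that $\wt(x) = w$ and that the indeterminates $\sigma_j$ actually coincide with the coefficients of the true locator polynomial of $x$. The plan is to read the Newton-identity block, together with the periodicity relations, as saying that $(\Lambda_i)$ satisfies a linear recurrence of order $w$ with characteristic polynomial $P(z) = z^w + \sigma_1 z^{w-1} + \cdots + \sigma_w$. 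Writing $\Lambda_i = \sum_{g \in \supp(x)} x_g g^{-i}$ as a sum of geometric sequences with distinct nonzero ratios $g^{-1}$, its minimal annihilating polynomial equals $\prod_{g \in \supp(x)}(z - g^{-1})$, which must divide $P$; hence $\wt(x) \leq w$. Combined with the hypothesis $\delta \geq w$ and the non-triviality of the solution, this forces $\wt(x) = w$, and a degree comparison identifies $\sigma(X) = 1 + \sigma_1 X + \cdots + \sigma_w X^w$ with the genuine locator polynomial of $x$.

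Finally, the equality between the number of weight-$w$ codewords and the number of solutions of $\mathcal{S_C}(w)$ will follow by assembling the two constructions above as mutual inverses on their respective sets. The subtlest bookkeeping point is to rule out the trivial tuple $\Lambda_i \equiv 0$ with arbitrary $\sigma_j$, which would otherwise produce the zero codeword of weight $0 \ne w$; this is handled by noting that the $\sigma_j$ read off any genuine weight-$w$ codeword satisfy $\sigma_w = (-1)^w \prod_j g_j \neq 0$, so the bijection is implicitly restricted to tuples whose $\sigma(X)$ has degree exactly $w$ and thus legitimately represents a locator polynomial.
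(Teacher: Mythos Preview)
The paper does not give its own proof of this corollary: it is quoted as background from Charpin~\cite[Corollary 3.8]{charpin1998open} in Section~\ref{sec_pre}, with no accompanying argument. There is therefore no paper proof to compare your proposal against.

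That said, your outline is the standard one and matches how this result is established in the source. Two minor remarks. First, a harmless index slip: with the paper's convention $\Lambda_i=\rho_i(x)=\sum_g x_g g^{i}$, the Frobenius relation $\Lambda_{qi\bmod N}=\Lambda_i^q$ follows directly from $x_g^q=x_g$ without passing through $\rho_{N-qi}$, and the ratios in your linear-recurrence argument are the $g$'s rather than the $g^{-1}$'s. Second, your handling of the degenerate tuples is not quite sufficient: restricting to $\sigma_w\neq 0$ still admits the spurious family $\Lambda_i\equiv 0$ with arbitrary $\sigma_1,\ldots,\sigma_{w-1}$ and $\sigma_w\in\FF_{q^n}^*$, none of which corresponds to a weight-$w$ codeword. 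The clean restriction is to solutions with $(\Lambda_i)$ not identically zero; under the hypothesis $\delta\geqslant w$ your recurrence argument then forces $\wt(x)=w$, whence the minimal polynomial of $(\Lambda_i)$ has degree exactly $w$ and the $\sigma_j$ are uniquely determined (with $\sigma_w=(-1)^w\prod_j g_j\neq 0$ automatically). With that adjustment the two constructions are mutually inverse and the count follows.
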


\subsection{Affine polynomials}
An \emph{affine polynomial} is a polynomial of the form
$$f(X)=c+\gamma_0X+\gamma_1X^q+\ldots+\gamma_kX^{q^k}$$
with coefficients in some extension field of $\FF_q$.
It is well known that the roots of $f(X)$ form an affine space over $\FF_q$.
Since the derivative of $f(X)$ is $\gamma_0$,
the roots of $f(X)$ have multiplicity one if and only if $\gamma_0\neq0$.
So $\gamma_0\neq0$ is necessary if we want to consider $f(X)$ as a locator polynomial.

%We here consider only primitive cyclic code, i.e. $N=q^n-1$, the multiplicative group of $\mathcal{M}$ 
%will be $\FF^*$ ($\FF$ is the field of order $q^n$).
Assume that the affine polynomial $f(X)$ splits in $\FF_{q^n}$ with roots of multiplicity one.
Then $f(X)$ can be identified to be the codeword of $\mathcal{A}$ whose support is
the affine space of their roots and whose symbols are from $\{0, 1\}$.
More precisely, define
\begin{align}\label{affine_codewords}
	x=\lambda \sum_{g\in V_k}(g+h),~\lambda\in\FF_q,
\end{align}
where $V_k$ is a subspace of $\FF_{q^n}$ of dimension $k$ over $\FF_q$.
Such a codeword can be identified, up to scalar multiplication,
with an affine polynomial whose roots are the elements of the affine space $h+V_k$.

The following result presents the form of the locator polynomial of a codeword whose support is coming from some subspace.
\begin{proposition}~\cite[Proposition 3.14]{charpin1998open} \label{THE ROOT}
	Let $\delta=q^{k}-1,\delta<N$, and set
	\[
	\cI_{k}=\{q^{k}-q^{j} \mid j\in [0,k-1]\}.
	\]
	Define the polynomial of degree $\delta$ ($\sigma_{\delta}\ne 0$):
	\[
	\sigma(X)=1+\sum_{i\in \cI_{k}}\sigma_iX^{i}, \sigma_i\in \FF_{q^n}.
	\]
	Denote by $v_i$ the root of $\sigma(X)$ and set $g_i=v_i^{-1}$. Then (i) and (ii) are equivalent.
	\begin{itemize}
		\item[(i)]
		$\sigma(X)$ splits in $\mathbb{F}_{q^n}$ with roots of multiplicity one.
		\item[(ii)]
		$\sigma(X)$ is the locator polynomial of codewords of the form~\eqref{affine_codewords}
		such that $h=0$ and $V_k$ is the set $\{0,g_1,\cdots,g_{\delta}\}$.
	\end{itemize}
\end{proposition}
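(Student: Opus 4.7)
The plan is to identify $\sigma(X)$ with the reciprocal of a $q$-linearized polynomial and then invoke the classical fact (already implicit in the discussion of affine polynomials above) that the root set of a $q$-polynomial in $\FF_{q^n}$ is an $\FF_q$-subspace. The nontrivial direction is (i) $\Rightarrow$ (ii); the converse is essentially by construction.

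For (ii) $\Rightarrow$ (i), if $\sigma(X)$ is the locator polynomial of a codeword as in~\eqref{affine_codewords} with $h=0$ and $V_k=\{0,g_1,\ldots,g_{\delta}\}$, then $\sigma(X)=\prod_{i=1}^{\delta}(1-g_iX)$ has the $\delta$ distinct nonzero elements $v_i=g_i^{-1}\in\FF_{q^n}^*$ as its roots, so $\sigma$ splits in $\FF_{q^n}$ with multiplicity one.

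For (i) $\Rightarrow$ (ii), I would introduce the reciprocal polynomial
\[
P(X)\;:=\;X^{q^k}\sigma(1/X)\;=\;X^{q^k}+\sum_{j=0}^{k-1}\sigma_{q^k-q^j}\,X^{q^j},
\]
which is a $q$-linearized polynomial of degree $q^k$ (with nonzero $X$-coefficient $\sigma_\delta$). Under (i), $\sigma$ has $\delta$ distinct roots $v_1,\ldots,v_\delta\in\FF_{q^n}^*$, so $P$ vanishes at $0$ and at each $g_i=v_i^{-1}$, yielding $q^k=\delta+1$ distinct roots of $P$ in $\FF_{q^n}$. Since a $q$-linearized polynomial induces an $\FF_q$-linear endomorphism of $\FF_{q^n}$, its root set is automatically an $\FF_q$-subspace, and here it must have dimension exactly $k$ by cardinality. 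Taking $V_k:=\{0,g_1,\ldots,g_{\delta}\}$ and $h=0$ in~\eqref{affine_codewords}, the resulting codeword has locator polynomial $\prod_{i=1}^{\delta}(1-g_iX)$, which coincides with $\sigma(X)$ because both are polynomials with constant term $1$ and the same simple roots $v_1,\ldots,v_\delta$.

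The one step that deserves a little care is the passage from ``all exponents of $\sigma$ lie in $\{0\}\cup\cI_k$'' to ``$P$ is a genuine $q$-polynomial of degree exactly $q^k$'': this uses both the normalization $\sigma(0)=1$, which supplies the $X^{q^k}$-term of $P$, and the hypothesis $\sigma_\delta\ne 0$, which supplies the $X$-term. Everything else—in particular the identification of the root set of a $q$-polynomial with an $\FF_q$-subspace of the right dimension—is standard linearized-polynomial theory, so I do not expect a serious obstacle.
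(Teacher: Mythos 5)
This proposition is cited from Charpin (1998) and the paper does not reproduce a proof, so there is no in-paper argument to compare against. Your reconstruction is correct and is the standard one: passing to the $q$-linearized reciprocal $P(X)=X^{q^k}\sigma(1/X)$, using that the root set of a $q$-polynomial is an $\FF_q$-subspace, counting $q^k$ distinct roots of the degree-$q^k$ polynomial $P$ to pin down the dimension as $k$, and then identifying $\sigma$ with the locator polynomial of the codeword on $V_k=\{0,g_1,\ldots,g_\delta\}$ via the shared simple roots and constant term $1$.
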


Let $u\in [0,N]$. We write the $q$-ary expansion of $u$ as
$u=\sum_{i=0}^{n-1}u_iq^i,\ u_{i}\in [0,q-1]$. The $q$-\emph{weight} of $u$ is
$$wt_q(u)=\sum_{i=0}^{n-1}u_i.$$

\begin{theorem}~\cite[Theorem 9]{kasami1968new}\label{PowerSums}
	Let $V_{k}$ be any subspace of $\FF_{q^n}$ of dimension k over $\FF_q$, the field of order $q$.
	Then the power sum
	\[
	\sum_{v\in V_k}v^{i}, \ i\in [1,q^n-1],
	\]
	are zero when the $q$-weight of $i$ is less than $k(q-1)$.
\end{theorem}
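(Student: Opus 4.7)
The plan is to expand the power sum using an $\FF_q$-basis of $V_k$ and reduce it to a product of elementary monomial sums over $\FF_q$, then exploit the fact that such sums vanish unless the exponent is a positive multiple of $q-1$.

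Concretely, I would fix an $\FF_q$-basis $e_1, \ldots, e_k$ of $V_k$ and parametrize a general element $v \in V_k$ as $v = c_1 e_1 + \cdots + c_k e_k$ with $c_l \in \FF_q$. Using the $q$-ary expansion $i = \sum_{j=0}^{n-1} u_j q^j$ and the $\FF_q$-linearity of the Frobenius endomorphism $x \mapsto x^q$ (noting that $c_l^{q^j} = c_l$), one may write
\[
v^i = \prod_{j=0}^{n-1} (v^{q^j})^{u_j} = \prod_{j=0}^{n-1} \left(\sum_{l=1}^k c_l\, e_l^{q^j}\right)^{u_j}.
\]
Applying the multinomial theorem to each factor and multiplying out yields an expansion of $v^i$ as an $\FF_{q^n}$-linear combination of monomials $c_1^{m_1} \cdots c_k^{m_k}$, where every exponent vector $(m_1, \ldots, m_k)$ appearing in the expansion satisfies $m_1 + \cdots + m_k = \sum_{j} u_j = wt_q(i)$.

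Next, I would sum over $v \in V_k$ by independently summing over $c_1, \ldots, c_k \in \FF_q$. The sum factors as $\prod_{l=1}^k \sum_{c_l \in \FF_q} c_l^{m_l}$ times a coefficient depending only on the $e_l^{q^j}$, and an elementary computation (writing $\FF_q^*$ as a cyclic group of order $q-1$ and summing a geometric series) shows that $\sum_{c \in \FF_q} c^m = 0$ unless $m$ is a positive multiple of $q-1$. Hence only terms with every $m_l \geqslant q-1$ can survive, which forces $wt_q(i) = \sum_l m_l \geqslant k(q-1)$. The contrapositive yields the claimed vanishing.

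The main piece requiring care is the bookkeeping of the multinomial expansion: one must verify that each factor indexed by $j$ contributes exactly $u_j$ to the total $c$-degree, so that the global $c$-degree of every surviving monomial equals $wt_q(i)$, not something larger. Once this accounting is set up, the conclusion is immediate from the standard power-sum identity over $\FF_q$.
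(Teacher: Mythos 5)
The paper cites this result without giving its own proof, so there is no in-paper argument to compare against. Your proof is correct and complete: writing $v = \sum_l c_l e_l$, using Frobenius to split $v^i = \prod_j \bigl(\sum_l c_l e_l^{q^j}\bigr)^{u_j}$ so that the $c$-degree of every monomial in the expansion equals $\wt_q(i)$, and then killing each monomial via $\sum_{c\in\FF_q} c^m = 0$ unless $m$ is a positive multiple of $q-1$, is exactly the standard argument (essentially that of Kasami--Lin--Peterson). The only thing worth making explicit in a final write-up is that the multinomial expansion is valid over $\FF_{q^n}$ with coefficients reduced mod $p$, and that the vanishing of some of those coefficients can only help you; you only need the one-directional implication.
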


\subsection{Affine invariant codes}
%Considering codes of $\mathcal{A}$.
Addition and multiplication in the field $\FF_{q^n}$ involve natural transformations on elements of $\cA$ including the following affine permutations
$$\sigma_{u,v}:\sum_{g\in\FF_{q^n}}x_g(g)\longmapsto\sum_{g\in\FF_{q^n}}x_g(ug+v),~~u\in\FF_{q^n}^*,~~v\in\FF_{q^n}.$$
The permutations $\sigma_{u,0}$ consist of shifting symbols unless the symbol labelled by ``$0$".
It is exactly the shift on codewords punctured in the position ``$0$". On the other hand we have
$$\sigma_{1,v}(\sum_{g\in\FF_{q^n}}x_g(g))=(v)(\sum_{g\in\FF_{q^n}}x_g(g)).$$
Hence we see that a code $\mathcal{C}\subseteq \cA$, which is invariant under the permutations $\sigma_{u,v}$ $(u\in\FF_{q^n}^*,v\in\FF_{q^n})$, is an ideal of $\mathcal{A}$. Such a code is called an \emph{affine-invariant code}.

Kasami et al.~\cite{kasami1966weight,kasami1967some} showed that an extended cyclic code is affine-invariant if and only if its defining set satisfies a certain combinatorial property as follows.
Let $S=[0, q^{n}-1]$. We denote by $\preceq$ the partial order relation on $S$ defined as follows:
\[
\forall s,t\in S : s\preceq t  \Leftrightarrow s_i\leqslant t_i, i\in [0,n-1],
\]
where $\sum_{i=0}^{n-1} s_iq^i$ is the $q\text{-}ary$ expansion of $s$
and $\sum_{i=0}^{n-1} t_iq^i$ is the $q\text{-}ary$ expansion of $t$.

\begin{lemma}\cite[Theorem 2.14]{kasami1966weight,kasami1967some,charpin1998open}\label{affine_invariant}
	Let us define the map
	\[
	\Delta : T\subset S \mapsto \Delta(T)=\bigcup_{t\in T}\{s\in S,\ s\preceq t\}.
	\]
	Let $\cC$ be an extended cyclic code, with the defining set $T$.
	Then $\cC$ is affine-invariant if and only if $\Delta(T)=T$.
\end{lemma}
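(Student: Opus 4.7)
The plan is to prove both directions by analyzing how translations act on the coefficients $\rho_s(x)$. Since every affine permutation can be written as $\sigma_{1,v}\circ\sigma_{u,0}$, and since $\rho_s(\sigma_{u,0}(x)) = u^s\rho_s(x)$ implies that the extended cyclic code $\cC$ is automatically stable under every $\sigma_{u,0}$, the claim reduces to showing that $\cC$ is translation-invariant if and only if $\Delta(T) = T$.

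To obtain the key formula, I would compute $\rho_t(\sigma_{1,v}(x)) = \sum_g x_g(g+v)^t$ for any $t\in T$. Writing $t = \sum_{i=0}^{n-1} t_i q^i$ in $q$-ary and applying the Frobenius identity $(g+v)^{q^i} = g^{q^i}+v^{q^i}$ to each factor of $(g+v)^t = \prod_i(g+v)^{t_i q^i}$, then expanding each $(g^{q^i}+v^{q^i})^{t_i}$ by the ordinary binomial theorem, I arrive at
\begin{equation*}
\rho_t(\sigma_{1,v}(x)) \;=\; \sum_{s\preceq t} c_{s,t}\, v^{t-s}\, \rho_s(x),
\end{equation*}
where the sum runs over $q$-ary minorants $s$ of $t$ and $c_{s,t} = \prod_i \binom{t_i}{s_i} \pmod{p}$, with $c_{t,t}=1$.

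For the implication $\Delta(T)=T \Rightarrow$ affine invariance, fix $t\in T$ and $x\in \cC$; by hypothesis every $s\preceq t$ lies in $T$, so $\rho_s(x)=0$ and hence $\rho_t(\sigma_{1,v}(x))=0$ for all $v$, which shows $\sigma_{1,v}(x)\in \cC$. For the converse, fix $t\in T$ and $x\in\cC$; affine invariance gives $\rho_t(\sigma_{1,v}(x))=0$ for every $v\in\FF_{q^n}$. Since the right-hand side is a polynomial in $v$ of degree at most $t < q^n$, and the monomials $\{v^j\}_{0\le j<q^n}$ are linearly independent as functions on $\FF_{q^n}$, every coefficient $c_{s,t}\rho_s(x)$ must vanish. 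Hence $\rho_s(x)=0$ for all $x\in\cC$ whenever $c_{s,t}\ne 0$, so such $s$ must lie in the defining set $T$.

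The main obstacle is the interplay between the two partial orders lurking in the binomial: the $q$-ary minorants $s\preceq t$ (which index \emph{all} terms appearing in the formula) and the finer $p$-adic minorants (which index only the \emph{surviving} coefficients $c_{s,t}\ne 0$). Reconciling them requires exploiting that $T$ is already a union of $q$-cyclotomic cosets, i.e.\ closed under multiplication by $q$, which at the level of $p$-adic digits is a cyclic shift by $\log_p q$ positions. Combining this cyclotomic closure with the $p$-adic closure forced by the argument of the previous paragraph must be shown to upgrade to full $q$-ary closure $\Delta(T)=T$; the converse inclusion $T \subseteq \Delta(T)$ is tautological since $t \preceq t$.
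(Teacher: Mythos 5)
The paper does not prove this lemma; it cites it, so I assess your argument on its own. Your plan is the standard one for the Kasami--Lin--Peterson theorem (reduce to translations via $\sigma_{u,v}=\sigma_{1,v}\circ\sigma_{u,0}$, expand $(g+v)^t$ by Frobenius plus the binomial theorem, then use that a polynomial of degree $<q^n$ vanishing on all of $\FF_{q^n}$ is zero). The formula $\rho_t(\sigma_{1,v}(x))=\sum_{s\preceq t}c_{s,t}\,v^{t-s}\rho_s(x)$ with $c_{s,t}=\prod_i\binom{t_i}{s_i}\equiv\binom{t}{s}\pmod p$ is correct, and the forward implication $\Delta(T)=T\Rightarrow$ affine-invariance goes through.

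The gap you flag in your final paragraph, however, cannot be closed: it is not merely unproven, it is false. Your converse argument only forces $s\in T$ for those $s\preceq t$ with $c_{s,t}\not\equiv 0\pmod p$, and by Lucas this set is exactly the set of base-$p$ minorants of $t$, a \emph{strict} subset of the base-$q$ minorants whenever $q>p$. The proposed upgrade --- that $q$-cyclotomic closure promotes $p$-adic closure to $q$-ary closure --- has a counterexample: take $q=4$, $p=2$, $n=2$, $N=15$, $T=\{0,2,8\}$. Here $\{2,8\}$ is a $4$-cyclotomic coset, $T$ is closed under base-$2$ minorants, and the length-$16$ code over $\FF_4$ with this defining set is affine-invariant, since in characteristic $2$ one has $(g+v)^2=g^2+v^2$ and $(g+v)^8=g^8+v^8$, so $\rho_t(\sigma_{1,v}(x))=\rho_t(x)+v^t\rho_0(x)=0$ for $t\in\{0,2,8\}$; yet $1\preceq 2$ in the paper's base-$q$ order while $1\notin T$, so $\Delta(T)\ne T$.

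What your argument actually establishes is the correct classical characterization: affine-invariance is equivalent to closure of $T$ under the base-$p$ partial order on $[0,q^n-1]$. The lemma as printed, which uses the base-$q$ order $\preceq$, is a misstatement of the cited result when $q$ is a proper prime power; the two orders coincide only when $q=p$. (This does not harm the paper's downstream Theorem~\ref{affine_invariant_extended}, where only the easy direction is invoked, and base-$q$ closure does imply base-$p$ closure.) So rather than attempting the impossible upgrade, observe that the partial order in the statement should be taken base $p$; with that correction your proof is complete and is the standard one.
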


\section{Definitions of sandwiched Reed--Muller codes}\label{sec_sandwiched_codes}
From now on, we set $n=2m$ and $\FF_{q^n}=\FF_q(\alpha)$ where $\alpha$ is a primitive element of $\FF_{q^{n}}$. 
The following notations will be used.

\begin{notation}\label{wtq}
\begin{itemize}
	\item[(a)]
	Any integer $u$, $0\leqslant u \leqslant q^{n}-1$, has a unique expression as
	$u= \sum _{i=0}^{{n}-1} u_i q^i $ with $u_i \in \{0,1,\cdots,q-1\}$.
%	Then, the \emph{$q$-weight} of $u$ is
%	\begin{align*}
%		\wt_q(u) := \sum_{i=0}^{{n}-1}u_i.
%	\end{align*}
	We define
	\begin{align*}
		O(u)&:= \sum_{0\leqslant i\leqslant {n}-1\atop i\textup{~is~odd}}u_i
		\textup{ and }\\
		E(u)&:= \sum_{0\leqslant i\leqslant {n}-1\atop i\textup{~is~even}}u_i .
	\end{align*}
    Note that $q^n-1-u=\sum_{i=0}^{{n}-1}(q-1-u_i)q^i$. Hence
    \begin{align}\label{wt_qn-1}
      \begin{split}
      \wt_q(q^n-1-u)=n(q-1)-\wt_q(u),\\
      O(q^n-1-u)-E(q^n-1-u)=E(u)-O(u).
      \end{split}
    \end{align}
	\item[(b)]
	For $-1\leqslant r<{n}(q-1)$, put
	$$ Z_r  :=  \{ \alpha^u \mid 0<u\leqslant q^{n}-1, \wt_q(u)\leqslant {n}(q-1)-r-1 \} . $$
	\item[(c)]
	For $0\leqslant r\leqslant {n}(q-1)$, let
	$$ \Theta ^{(r)}
	:=\{ \alpha^u \mid 0\leqslant u \leqslant q^{n}-1 , \wt_q(u) = {n}(q-1)-r \}.  $$

	\item[(d)]
	For $0\leqslant k \leqslant m(q-1)$, we put
	$$\Theta_k:=\{ \alpha^u \mid 0\leqslant u\leqslant q^{n}-1, \ |O(u) - E(u)| =k \}. $$

	\item[(e)]
    Note that $\wt_q(u)$ and $|O(u) - E(u)|$ have the same parity.
	For $0\leqslant r\leqslant {n}(q-1)$, let
  \begin{align*}
    M_r&:=\\
    &\begin{cases}
    M_+ := \{ 0\leqslant k \leqslant m(q-1)~|~k \textup{~is even}\} & \textup{if } r \textup{\ is even} \\
    M_- := \{ 0\leqslant k \leqslant m(q-1)~|~k \textup{~is odd} \} & \textup{if } r \textup{\ is odd.}\\
    \end{cases}
  \end{align*}
	For $0\leqslant r\leqslant {n}(q-1)$ and  $k\in M_r$,  we  define
	\begin{align*}
		\Theta ^{(r)}_{k} :&= \{ \alpha^u \mid 0\leqslant u \leqslant q^{n}-1, \wt_q(u) = {n}(q-1)-r,\\
		&\qquad|O(u)-E(u)| =k \}  \\&= \Theta ^{(r)} \cap \Theta _k .
	\end{align*}
 	Obviously $\Theta ^{(r)} \cap \Theta _k = \emptyset $ if $k\not\in M_r $. 
 	It is ready to see that $\Theta ^{(r)} \cap \Theta _k$ is a union of $q$-cyclotomic cosets modulo $N$ as $n=2m$ is even.
\end{itemize}
\end{notation}

In order to compute the dimension of our new codes, now we use a counting argument to show the size of the set $\Theta ^{(r)}_{k}$.

\begin{lemma}\label{theta^r_k}
    For $0\leqslant r\leqslant {n}(q-1)$ and $k\in M_r$, we have
    \begin{align*}
      \big|\Theta ^{(r)}_{k}\big| = 
      \begin{cases}
          2\left(\sum_{i=0}^{m}(-1)^i{m\choose i}{\frac{n(q-1)-r-k}{2}-iq+m-1\choose \frac{n(q-1)-r-k}{2}-iq}\right) \left(\sum_{i=0}^{m}(-1)^i{m\choose i}{\frac{n(q-1)-r+k}{2}-iq+m-1\choose \frac{n(q-1)-r+k}{2}-iq}\right) & k \neq 0 \\
          \left(\sum_{i=0}^{m}(-1)^i{m\choose i}{\frac{n(q-1)-r}{2}-iq+m-1\choose \frac{n(q-1)-r}{2}-iq}\right)^2 & k=0.
      \end{cases}
    \end{align*}
\end{lemma}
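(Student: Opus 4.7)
The plan is to count, via the $q$-ary expansion $u = \sum_{i=0}^{n-1} u_i q^i$, the number of admissible exponents $u \in [0, q^{n}-1]$ and to identify this count with $|\Theta^{(r)}_{k}|$. Since $\alpha$ has order $N = q^{n}-1$, the map $u \mapsto \alpha^u$ on $[0, q^{n}-1]$ has only the single collision $\alpha^0 = \alpha^{q^{n}-1}$; but these two exponents have $q$-weights $0$ and $n(q-1)$ respectively, so they never appear together in the same $\Theta^{(r)}$. Hence $|\Theta^{(r)}_{k}|$ is exactly the number of $u \in [0,q^n-1]$ satisfying $\wt_q(u) = n(q-1)-r$ and $|O(u)-E(u)| = k$.

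Next, I would decompose the counting using the parity of digit positions. Set $a := (n(q-1)-r-k)/2$ and $b := (n(q-1)-r+k)/2$; since $k \in M_r$, both $a$ and $b$ are non-negative integers in $[0,m(q-1)]$, with $a+b = n(q-1)-r$ and $b-a = k$. The conditions $O(u)+E(u) = n(q-1)-r$ and $|O(u)-E(u)| = k$ are equivalent to $\{O(u),E(u)\} = \{a,b\}$ as a multiset. Because $n=2m$, there are exactly $m$ even-indexed and $m$ odd-indexed $q$-ary digits of $u$, and they can be chosen independently. Letting $N(t)$ denote the number of tuples $(y_1,\ldots,y_m) \in [0,q-1]^m$ with $y_1+\cdots+y_m = t$, the count of $u$'s with $(E(u),O(u)) = (a,b)$ equals $N(a)N(b)$, and the same for $(b,a)$. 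For $k > 0$ these two cases are disjoint and the total is $2N(a)N(b)$; for $k = 0$ they coincide and the total is $N(a)^2$.

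Finally, I would derive the bounded-composition identity
\[ N(t) = \sum_{i=0}^{m} (-1)^i \binom{m}{i} \binom{t-iq+m-1}{m-1} \]
by standard inclusion-exclusion: starting from $\binom{t+m-1}{m-1}$ non-negative solutions of $y_1+\cdots+y_m = t$ (stars and bars), subtract those that violate some bound $y_j \leq q-1$ via the substitution $y_j \mapsto y_j - q$, which reduces the problem on the chosen $i$ coordinates to a shifted stars-and-bars count. Using the symmetry $\binom{t-iq+m-1}{m-1} = \binom{t-iq+m-1}{t-iq}$ together with the convention that binomials with negative lower index vanish, the sum may be extended to $i = m$ and rewritten in the exact form appearing in the lemma. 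Substituting into the $k>0$ and $k=0$ cases above completes the computation. I do not foresee any genuine obstacle here: the essential content is the independence of even- and odd-indexed $q$-ary digits, a feature made possible precisely by the assumption $n = 2m$.
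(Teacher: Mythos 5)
Your proof is correct and follows essentially the same strategy as the paper's: reduce the count of $\Theta^{(r)}_k$ to counting exponents $u\in[0,q^n-1]$ with prescribed $(E(u),O(u))$, exploit the independence of the $m$ even-indexed and $m$ odd-indexed $q$-ary digits, and evaluate each factor via the standard inclusion--exclusion formula for bounded compositions. Your explicit check that the only collision $\alpha^0=\alpha^{q^n-1}$ cannot occur within a single $\Theta^{(r)}$ (since $\wt_q(0)=0$ while $\wt_q(q^n-1)=n(q-1)$) is a small point the paper leaves implicit, but otherwise the arguments coincide.
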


\if{false}
%\begin{proof}
    Note that $|\Theta ^{(r)}_{k}| = |\{u: 0\leqslant u \leqslant q^{n}-1, \wt_q(u) = {n}(q-1)-r,|O(u)-E(u)| =k \}|$.
    Write $u= \sum _{i=0}^{{n}-1} u_i q^i $ with $u_i \in \{0,1,\cdots,q-1\}$.
    Firstly, we assume that $O(u)-E(u)=k$. Combining this with $O(u)+E(u)=n(q-1)-r$ we have
    \begin{align*}
    E(u)&=\frac{n(q-1)-r-k}{2}, \\
    O(u)&=\frac{n(q-1)-r+k}{2}.
    \end{align*}
  There are $m$ $u_i$'s when $0\leqslant i\leqslant n-1$ and $i$ is even, $u_0,u_2,\cdots,u_{n-2}$, as well as $m$ $u_i$'s when $0\leqslant i\leqslant n-1$ and $i$ is odd, $u_1,u_3,\cdots,u_{n-1}$. An inclusion-exclusion argument shows that the inner sum $\sum_{i=0}^{m}(-1)^i{m\choose i}{\frac{n(q-1)-r-k}{2}-iq+m-1\choose \frac{n(q-1)-r-k}{2}-iq}$ is the number of ways of picking $\frac{n(q-1)-r-k}{2}$ objects from a set of $m$ objects, when no object can be chosen more than $q-1$ times. Similarly, $\sum_{i=0}^{m}(-1)^i{m\choose i}{\frac{n(q-1)-r+k}{2}-iq+m-1\choose \frac{n(q-1)-r+k}{2}-iq}$ is the number of ways of picking $\frac{n(q-1)-r+k}{2}$ objects from a set of $m$ objects, when no object can be chosen more than $q-1$ times.
  
  Next, the number of $u$ satisfying these conditions is 
  $$\left(\sum_{i=0}^{m}(-1)^i{m\choose i}{\frac{n(q-1)-r-k}{2}-iq+m-1\choose \frac{n(q-1)-r-k}{2}-iq}\right) \left(\sum_{i=0}^{m}(-1)^i{m\choose i}{\frac{n(q-1)-r+k}{2}-iq+m-1\choose \frac{n(q-1)-r+k}{2}-iq}\right).$$
  By the same way, if $E(u)-O(u)=k\neq0$, we get the same number. Hence, 
  $$\big|\Theta^{(r)}_{k}\big| =2\left(\sum_{i=0}^{m}(-1)^i{m\choose i}{\frac{n(q-1)-r-k}{2}-iq+m-1\choose \frac{n(q-1)-r-k}{2}-iq}\right) \left(\sum_{i=0}^{m}(-1)^i{m\choose i}{\frac{n(q-1)-r+k}{2}-iq+m-1\choose \frac{n(q-1)-r+k}{2}-iq}\right).$$
  When $k=0$, $O(u)-E(u)=E(u)-O(u)$, so 
  $$\big|\Theta^{(r)}_{0}\big| =\left(\sum_{i=0}^{m}(-1)^i{m\choose i}{\frac{n(q-1)-r}{2}-iq+m-1\choose \frac{n(q-1)-r}{2}-iq}\right)^2.$$ 
%\end{proof}
\fi

\begin{proof}
Firstly, we set 
\begin{align*}
  \Delta^{(r)}_k&=\{u \mid 0\leqslant u\leqslant q^n-1,\wt_q(u)=n(q-1)-r,|O(u)-E(u)|=k\},\\
  S_1&=\{u\in\Delta^{(r)}_k\mid O(u)-E(u)=k\},\\
  S_2&=\{u\in\Delta^{(r)}_k\mid E(u)-O(u)=k\}.
\end{align*}
Then we have $|S_1|=|S_2|$ and
$$|\Theta^{(r)}_{k}|=|\Delta^{(r)}_k|=
\begin{cases}
2|S_1| & \mbox{~if~} k\neq0\\
|S_1| & \mbox{~if~} k=0.
\end{cases}
$$
Now we compute $|S_1|$. Suppose that $u\in S_1$. Then it is ready to check that
\begin{align*}
    E(u)&=\frac{n(q-1)-r-k}{2}, \\
    O(u)&=\frac{n(q-1)-r+k}{2}.
\end{align*}
Write $u= \sum _{i=0}^{{n}-1} u_i q^i $ with $u_i \in \{0,1,\cdots,q-1\}$, and
\begin{align*}
 \widetilde{E}&=\left\{(u_0,u_2,\cdots,u_{n-2})\mid u_i\in[0,q-1],\sum_{0\leqslant i\leqslant {n}-1\atop i\textup{~is~even}}u_i=\frac{n(q-1)-r-k}{2} \right\},\\
\widetilde{O}&=\left\{(u_1,u_3,\cdots,u_{n-1})\mid u_i\in[0,q-1],\sum_{0\leqslant i\leqslant {n}-1\atop i\textup{~is~odd}}u_i=\frac{n(q-1)-r+k}{2} \right\}.   
\end{align*}
It is not hard to check that $|\widetilde{E}|$ is the number of ways of picking $\frac{n(q-1)-r-k}{2}$ objects from a set of $m$ objects, while no object can be chosen more than $q-1$ times. By the principle of inclusion and exclusion, we have
$$|\widetilde{E}|=\sum_{i=0}^{m}(-1)^i{m\choose i}{\frac{n(q-1)-r-k}{2}-iq+m-1\choose \frac{n(q-1)-r-k}{2}-iq}.$$ 
Similarly, 
$$|\widetilde{O}|=\sum_{i=0}^{m}(-1)^i{m\choose i}{\frac{n(q-1)-r+k}{2}-iq+m-1\choose \frac{n(q-1)-r+k}{2}-iq}.$$
Hence, 
$$|S_1|=|\widetilde{E}||\widetilde{O}|=\left(\sum_{i=0}^{m}(-1)^i{m\choose i}{\frac{n(q-1)-r-k}{2}-iq+m-1\choose \frac{n(q-1)-r-k}{2}-iq}\right) \left(\sum_{i=0}^{m}(-1)^i{m\choose i}{\frac{n(q-1)-r+k}{2}-iq+m-1\choose \frac{n(q-1)-r+k}{2}-iq}\right).$$
Then the claim follows directly.
\end{proof}

\begin{definition}\cite[Theorem 1]{kasami1968new} \cite[Chapter 13, Theorem 11]{macwilliams1977theory} \cite[Definition 5.14, Theorem 5.18]{assmus1998polynomial}
  For $-1\leqslant r<{n}(q-1)$, define $\cR_q(r,{n})^*$ to be the length $q^{n}-1$ cyclic code with the zero set $Z_r$ where $Z_r$ is as in Notation~\ref{wtq} (b).
  The extended code of $\cR_q(r,{n})^*$ is the \emph{$r$th-order generalized Reed--Muller} code $\cR_q(r,{n})$.
  Note that $\cR_q (-1,{n})=\{0\}$, and $\cR_q ({n},{n})=\FF_q^{q^n}$ which is not an extended cyclic code.
\end{definition}

Kasami et al.~\cite{kasami1968new} first determined the minimum distance of $\cR_q(r,{n})^*$,
then later Delsarte et al.~\cite{delsarte1970generalized} gave a complete description of the minimum vectors. For a code $\cC$ we will use $\dist(\cC)$ to denote its minimum distance.

\begin{theorem}\cite[Theorem 5]{kasami1968new}\label{RM_min_dist}%\cite[Theorem 2.6.2]{delsarte1970generalized}\cite[Corollary 5.26]{assmus1998polynomial}
  For $0\leqslant r<{n}(q-1)$, write $r=\rho(q-1)+s$, $0\leqslant \rho <n, 0\leqslant s<q-1$. We have
  \begin{align*}
    \dist(\cR_q(r,{n})^*)&=(q-s)q^{{n}-\rho-1}-1,\\
    \dist(\cR_q(r,{n})) & =(q-s)q^{{n}-\rho-1}.
  \end{align*}
\end{theorem}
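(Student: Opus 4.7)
The plan is to prove each equality by matching lower and upper bounds, treating the punctured cyclic code $\cR_q(r,n)^*$ first and then leveraging affine invariance to pass to the extended code $\cR_q(r,n)$.

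For the lower bound on $\cR_q(r,n)^*$, I would invoke the BCH bound. The defining set $T$ of $\cR_q(r,n)^*$ consists of all $u$ with $1 \leq u \leq q^n - 1$ and $\wt_q(u) \leq n(q-1) - r - 1$. Writing $r = \rho(q-1) + s$, an elementary calculation on $q$-ary expansions shows that the smallest positive integer $u$ with $\wt_q(u) \geq n(q-1) - r = (n-\rho)(q-1) - s$ is obtained by packing the low-order digits with $q-1$'s: this yields
$$u^* = (q-1-s)q^{n-\rho-1} + \sum_{i=0}^{n-\rho-2}(q-1)q^i = (q-s)q^{n-\rho-1} - 1.$$
Hence $\{1,2,\ldots, u^* - 1\} \subseteq T$, and the BCH bound applied to the string of consecutive zeros $\alpha, \alpha^2, \ldots, \alpha^{u^* - 1}$ gives $\dist(\cR_q(r,n)^*) \geq u^* = (q-s)q^{n-\rho-1} - 1$.

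For the lower bound on the extended code, I would exploit affine invariance. The set $T \cup \{0\}$ is downward-closed under $\preceq$, since $s \preceq t$ forces $\wt_q(s) \leq \wt_q(t)$; by Lemma~\ref{affine_invariant} this makes $\cR_q(r,n)$ affine-invariant. Given any nonzero codeword $c$ of $\cR_q(r,n)$, pick a position $h \in \FF_{q^n}$ with $c_h \neq 0$; applying $\sigma_{1,-h}$ yields a codeword of the same weight whose symbol at position $0$ is nonzero. Puncturing at $0$ produces a codeword of $\cR_q(r,n)^*$ of weight $\wt(c) - 1$, so $\wt(c) \geq (q-s)q^{n-\rho-1}$.

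Finally, for the matching upper bounds, I would exhibit an explicit codeword. Identifying $\FF_{q^n}$ with $\FF_q^n$ via a fixed $\FF_q$-basis and writing coordinates $(x_1, \ldots, x_n)$, consider
$$f(x_1, \ldots, x_n) = \prod_{i=1}^{\rho}\bigl(x_i^{q-1} - 1\bigr) \cdot \prod_{j=1}^{s}(x_{\rho+1} - \gamma_j),$$
with $\gamma_1, \ldots, \gamma_s$ distinct nonzero elements of $\FF_q$ (available since $s \leq q-2$). The first product is nonzero exactly on $\{x_1 = \cdots = x_\rho = 0\}$, and on that locus the second kills precisely the $s$ values $x_{\rho+1} \in \{\gamma_1,\ldots,\gamma_s\}$; hence $f$ has weight $(q-s)q^{n-\rho-1}$ on $\FF_q^n$ and total degree $\rho(q-1) + s = r$, placing $f$ in $\cR_q(r,n)$. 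Since each $\gamma_j \neq 0$, one has $f(0) \neq 0$, so puncturing at position $0$ yields a codeword of $\cR_q(r,n)^*$ of weight $(q-s)q^{n-\rho-1} - 1$. The main subtlety is the step that extending the code gains exactly one in minimum distance, which is not automatic in general but follows cleanly here from affine invariance.
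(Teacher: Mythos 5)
The paper cites this result as Theorem 5 of Kasami et al.\ and does not supply a proof, so there is no internal argument to compare against directly. Your proof is correct and self-contained, and it is consonant with the paper's later technique: the lower-bound half (smallest $u$ with $\wt_q(u) > n(q-1)-r-1$ is $(q-s)q^{n-\rho-1}-1$, giving a consecutive string of zeros and hence a BCH bound for $\cR_q(r,n)^*$; then translation via affine invariance to gain one for the extended code) is exactly the computation the paper reproduces inside the proof of Theorem~\ref{thm_distance}. The upper-bound half via the explicit codeword $f=\prod_{i=1}^{\rho}(x_i^{q-1}-1)\prod_{j=1}^{s}(x_{\rho+1}-\gamma_j)$, evaluated at $0$ to show the punctured code reaches $(q-s)q^{n-\rho-1}-1$, is the standard Kasami-style witness and it handles the edge cases $\rho=0$, $s=0$, and $\rho=n-1$ cleanly. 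The one implicit move worth flagging is that you pass between the paper's definition of $\cR_q(r,n)$ as an extended cyclic code with prescribed zero set and the polynomial-evaluation description of generalized Reed--Muller codes (evaluations of polynomials of total degree at most $r$ on $\FF_q^n \cong \FF_{q^n}$); the paper records this equivalence only through its citations, so in a fully self-contained treatment that identification would need to be imported explicitly, but it is standard and causes no gap.
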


 \begin{theorem}\cite[Theorem 2.6.3]{delsarte1970generalized}\cite[Theorem 3.18]{charpin1998open}\label{RM_min_vectors}
  For $0\leqslant r<{n}(q-1)$, write $r=\rho(q-1)+s$, $0\leqslant \rho <n, 0\leqslant s<q-1$. Assume that $s=0$.
  Then the minimum vectors of the punctured code $\cR_q(r,{n})^*$ are the codewords of weight $\delta=q^{{n}-\rho}-1$, whose locators are the nonzero elements of some subspace $V_{{n}-\rho}$ of $\FF_{q^{n}}$ of dimension ${n}-\rho$ over $\FF_q$. These are in the algebra $\mathcal{M}$ precisely the codewords
 $$x=\sum_{g\in V_{{n}-\rho}^{*}}\lambda(g)~,~\lambda\in \FF_q^*,$$
 where $V_{{n}-\rho}^*=V_{{n}-\rho}\backslash\{0\}$.
 And the locator polynomial of $x$ is 
 $$\sigma_x(X)=1-\sum_{s\in \mathcal{I}_{n-\rho}}\frac{\Lambda_{\delta+s}}{\Lambda_{\delta}}X^s,$$
where $\cI_{n-\rho}=\{q^{n-\rho}-q^j \mid j\in[0,n-\rho-1]\}.$

   Similarly, the minimum vectors of the extended code $\cR_q(r,{n})$ are the codewords of weight $q^{{n}-\rho}$ whose locators are the elements of some affine subspace $h+V_{{n}-\rho},~h\in\FF_{q^{n}}$, of $\FF_{q^{n}}$ of dimension ${n}-\rho$. These are in the algebra $\mathcal{A}$ precisely the codewords
  $$x=\lambda \sum_{g\in V_{{n}-\rho}}(g+h),~\lambda\in\FF_q^*.$$
\end{theorem}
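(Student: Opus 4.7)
The plan is to prove the punctured statement by combining the Newton identities with Proposition~\ref{THE ROOT}, then deduce the extended statement from affine invariance.

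For the easy direction of the punctured case, given any $(n-\rho)$-dimensional $\FF_q$-subspace $V$ of $\FF_{q^n}$ and $\lambda \in \FF_q^*$, I check that $x = \lambda \sum_{g \in V^*}(g)$ has weight $|V^*| = q^{n-\rho}-1 = \delta$ and lies in $\cR_q(r,n)^*$: for each $s$ in the defining set, $\wt_q(s) \le (n-\rho)(q-1) - 1 < \dim_{\FF_q}(V)\cdot(q-1)$, so Theorem~\ref{PowerSums} gives $\sum_{g \in V} g^s = 0$, hence $\rho_s(x) = \lambda\sum_{g\in V}g^s = 0$ (using $0^s=0$ for $s>0$).

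The converse is the heart of the argument. Let $x$ be a codeword of weight exactly $\delta$ with locator polynomial $\sigma_x(X) = 1 + \sum_{i=1}^{\delta} \sigma_i X^i$. I establish by induction on $j \in [1,\delta]$ the identity
\[
\Lambda_{\delta+j} + \sigma_j \Lambda_\delta = 0,
\]
starting from the Newton identity $\Lambda_{\delta+j} + \sum_{i=1}^{\delta} \sigma_i \Lambda_{\delta+j-i} = 0$. Three $q$-ary digit facts drive the collapse. First, $\Lambda_k = 0$ for $1 \le k \le \delta-1$, since $k < q^{n-\rho}$ forces $\wt_q(k) < (n-\rho)(q-1)$; this kills every term with $i > j$. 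Second, $\Lambda_\delta \ne 0$: otherwise Newton at $j=0$ forces $\Lambda_0 = 0$ (since $\sigma_\delta \ne 0$), and a secondary induction then propagates $\Lambda_k = 0$ for every $k$, forcing $x=0$. Third, for any $i \in [1,j-1]\cap \cI_{n-\rho}$ with $i = q^{n-\rho}-q^s$, the index $\delta+j-i$ equals $j-1+q^s$, and a $q$-ary digit calculation shows $\wt_q(j-1+q^s) < (n-\rho)(q-1)$, so $\Lambda_{\delta+j-i} = 0$; combined with the inductive hypothesis (which gives $\sigma_i = 0$ for $i \in [1,j-1]\setminus \cI_{n-\rho}$), this removes every remaining cross term. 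A final digit count establishes that $\wt_q(\delta+j) = 1+\wt_q(j-1)$, with $\wt_q(j-1) = (n-\rho)(q-1)-1$ iff $j \in \cI_{n-\rho}$, so $\Lambda_{\delta+j} = 0$ for $j \notin \cI_{n-\rho}$; consequently $\sigma_j = 0$ for $j \notin \cI_{n-\rho}$ and $\sigma_j = -\Lambda_{\delta+j}/\Lambda_\delta$ otherwise, which is the claimed form of $\sigma_x(X)$. Since $\sigma_x$ splits in $\FF_{q^n}$ with simple roots, Proposition~\ref{THE ROOT} identifies its roots as $V \setminus \{0\}$ for some $(n-\rho)$-dimensional $\FF_q$-subspace $V$ and writes $x = \lambda \sum_{g \in V^*}(g)$.

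The extended statement follows from affine invariance. Let $x \in \cR_q(r,n)$ be of minimum weight $q^{n-\rho}$ and pick $h \in \supp(x)$. By Lemma~\ref{affine_invariant}, the translation $\sigma_{1,-h}$ sends $x$ into $\cR_q(r,n)$ and places $0$ in its support. Restricting to $\FF_{q^n}^*$ yields a minimum-weight codeword of $\cR_q(r,n)^*$, equal by the punctured case to $\lambda \sum_{g \in V^*}(g)$, and the extension rule forces the $0$-th coordinate to be $-\lambda(q^{n-\rho}-1) = \lambda$ in characteristic $p$. Hence $\sigma_{1,-h}(x) = \lambda \sum_{g \in V}(g)$, and untranslating gives $x = \lambda \sum_{g \in V}(g+h)$. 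The main obstacle is the $q$-ary digit analysis used twice in the induction—to confirm $\Lambda_{\delta+j} = 0$ for $j \notin \cI_{n-\rho}$, and to kill the cross terms $\sigma_i \Lambda_{j-1+q^s}$—whose clean outcome ultimately reflects that $\cI_{n-\rho}$ is precisely the set of $j \in [1,\delta]$ for which $j-1$ attains the maximal $q$-weight $(n-\rho)(q-1)-1$.
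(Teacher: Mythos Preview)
The paper does not supply its own proof of this theorem; it is quoted from \cite{delsarte1970generalized} and \cite{charpin1998open} as a background result. Your argument is essentially Charpin's Newton-identities proof, and it matches almost step for step the argument the paper itself gives later for the analogous Proposition~\ref{Locator polynomial} and Theorem~\ref{thm_min_vec_cI} (there for the sandwiched codes $\cC_q(r,I,n)^*$ rather than $\cR_q(r,n)^*$): the same vanishing $\Lambda_1=\cdots=\Lambda_{\delta-1}=0$, the same digit computation showing $\Lambda_{\delta+u}=0$ for $u\in[1,\delta-1]\setminus\cI_{n-\rho}$, the same induction collapsing each identity $I_u$ to $\Lambda_{\delta+u}+\Lambda_\delta\sigma_u=0$, and the same appeal to Proposition~\ref{THE ROOT}. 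So in spirit your route is exactly the paper's methodology.

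There is one genuine gap. After invoking Proposition~\ref{THE ROOT} you write ``and writes $x=\lambda\sum_{g\in V^*}(g)$''. Proposition~\ref{THE ROOT} only tells you that the \emph{support} of $x$ is $V_{n-\rho}\setminus\{0\}$; the locator polynomial carries no information about the individual nonzero symbols $x_g\in\FF_q^*$, so at this point you only know $x=\sum_{g\in V^*}\lambda_g(g)$ with possibly varying $\lambda_g$. You still need the short extra step the paper carries out in its proof of Theorem~\ref{thm_min_vec_cI}: by your easy direction $y=\sum_{g\in V^*}(g)\in\cR_q(r,n)^*$, so for any fixed $g_0\in V^*$ the codeword $x-\lambda_{g_0}y$ has support strictly contained in $V^*$ and hence weight $<\delta$, forcing $x=\lambda_{g_0}y$. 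With that line inserted, your punctured argument is complete, and your extended argument via affine invariance is then correct as written.
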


Now we define new classes of (extended) cyclic codes sandwiched between generalized RM codes.
For the binary case this was first introduced by Hu and Nebe~\cite{hu2020strongly}. 
\begin{definition}
  Let $0\leqslant r<{n}(q-1) $ and $I\subset M_{r}$ be given. Put $\overline{I}=M_r \setminus I$ and
  $$Z_{r,I} := Z_{r-1} \bigg\backslash\bigg(\bigcup _{k\in I} \Theta ^{(r)}_{k} \bigg)
  =Z_{r} \bigcup \bigg(\bigcup_{k\in \overline{I}} \Theta ^{(r)}_{k}\bigg).$$
  Note that $Z_r \subseteq Z_{r,I} \subseteq Z_{r-1} $.
  Then, let
  $\cC_q(r,I,{n})^* \subseteq \FF _q^{q^{n}-1} $ be the cyclic code with the zero set $Z_{r,I}$
  and $\cC_q(r,I,{n}) \subseteq \FF _q^{q^{n}} $ the extended code of $\cC_q(r,I,{n})^*$.
  Also we define
  $$
  \cC_q({n(q-1)},I,{n}) =
  \begin{cases}
    \cR_q({n(q-1)}-1,n) & \textup{if } 0\notin I,\\
    \cR_q({n(q-1)},{n})=\FF_q^{q^{n}} &\textup{otherwise.}
  \end{cases}
  $$
\end{definition}

\begin{remark}\label{CIdetails}
  \begin{itemize}
    \item[(a)] $\cR_q (r-1,{n}) \subseteq \cC_q(r,I,{n}) \subseteq \cR_q (r,{n}) $.
    \item[(b)] $\cR_q (r-1,{n}) = \cC_q(r,\emptyset,{n}) $.
    \item[(c)] $\cR_q (r,{n}) = \cC_q(r,M_r,{n}) $.
    \item[(d)] If $I\subseteq J \subseteq M_r$ then
    $\cC_q(r,I,{n}) \subseteq \cC_q(r,J,{n}) $.
  \end{itemize}
\end{remark}

Because of $(a)$ of Remark~\ref{CIdetails} we call those $\cC_q(r,I,{n})$ \emph{sandwiched Reed--Muller codes}.
For a code $C$ we will use $\dim(C)$ to denote its dimension. 

\begin{lemma}\cite[Theorem 5.5]{assmus1998polynomial}\label{dim(R)}
    For any $r$ such that $0\leqslant r\leqslant n(q-1)$,
    $$\dim(\cR_q(r,n))=\sum_{i=0}^n(-1)^i{n\choose i}{r-iq+n\choose r-iq}.$$
\end{lemma}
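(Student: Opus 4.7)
The plan is to reduce the claim to a counting problem over $q$-ary digit expansions and then evaluate a standard generating function. First I would translate the algebraic quantity $\dim(\cR_q(r,{n}))$ into a digit count. The map $c\mapsto \widehat{c}$ appearing in the definition of the extended code is a linear bijection, so $\dim(\cR_q(r,{n}))=\dim(\cR_q(r,{n})^*)=(q^n-1)-|Z_r|$. Using the identity $\wt_q(q^n-1-u)=n(q-1)-\wt_q(u)$ from~\eqref{wt_qn-1}, the involution $u\mapsto q^n-1-u$ carries $\{u\in[1,q^n-1]:\wt_q(u)\leqslant n(q-1)-r-1\}$ bijectively onto $\{v\in[0,q^n-2]:\wt_q(v)\geqslant r+1\}$; this latter set differs from $\{v\in[0,q^n-1]:\wt_q(v)\geqslant r+1\}$ exactly in whether it contains $v=q^n-1$, which has $q$-weight $n(q-1)\geqslant r+1$. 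Consequently $|Z_r|=q^n-1-|W_r|$ where $W_r:=\{v\in[0,q^n-1]:\wt_q(v)\leqslant r\}$, so $\dim(\cR_q(r,{n}))=|W_r|$.

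Second, I would identify each $v\in[0,q^n-1]$ with its unique $q$-ary digit tuple $(v_0,\ldots,v_{n-1})\in[0,q-1]^n$, so that $|W_r|$ counts those tuples with $v_0+\cdots+v_{n-1}\leqslant r$. Introducing a slack variable $v_n\geqslant 0$ so that $\sum_{i=0}^n v_i=r$, the generating function of the sequence $(|W_r|)_{r\geqslant 0}$ is
\[
\sum_{r\geqslant 0}|W_r|\,x^r=\left(\frac{1-x^q}{1-x}\right)^{\!n}\cdot\frac{1}{1-x}=\frac{(1-x^q)^n}{(1-x)^{n+1}}.
\]
Third, I would expand $(1-x^q)^n=\sum_{i=0}^n(-1)^i\binom{n}{i}x^{iq}$ and $(1-x)^{-(n+1)}=\sum_{k\geqslant 0}\binom{k+n}{n}x^k$, then extract the coefficient of $x^r$ to obtain
\[
|W_r|=\sum_{i=0}^n(-1)^i\binom{n}{i}\binom{r-iq+n}{n}=\sum_{i=0}^n(-1)^i\binom{n}{i}\binom{r-iq+n}{r-iq},
\]
with the convention that $\binom{r-iq+n}{r-iq}=0$ when $r-iq<0$. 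This is the desired formula.

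The main obstacle is the careful bookkeeping in the first step: keeping track of whether $u=0$ and $v=q^n-1$ lie inside or outside the various sets under the complementation $u\mapsto q^n-1-u$, and invoking that extending a cyclic code preserves dimension so that the relation $\dim(\cR_q(r,{n})^*)=(q^n-1)-|Z_r|$ transfers to the extended code. Once the reduction to $|W_r|$ is in place, the generating function calculation is essentially mechanical, and an equivalent derivation via direct inclusion-exclusion on the constraints $v_i\leqslant q-1$ would give the same closed form.
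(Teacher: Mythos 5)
The paper cites this dimension formula from Assmus and Key and does not prove it, so there is no paper argument to compare against. Your proof is correct and follows the standard route: reduce $\dim\cR_q(r,n)$ to the count $|W_r|$ of $q$-ary digit tuples $(v_0,\dots,v_{n-1})\in[0,q-1]^n$ with $v_0+\cdots+v_{n-1}\leqslant r$, using that extension preserves dimension, that $\dim\cR_q(r,n)^*=(q^n-1)-|Z_r|$, and the complementation $u\mapsto q^n-1-u$ together with~\eqref{wt_qn-1}; then evaluate $|W_r|$ by expanding $\frac{(1-x^q)^n}{(1-x)^{n+1}}$, which is exactly the inclusion--exclusion on the cap constraints $v_i\leqslant q-1$. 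One small boundary remark: the lemma as stated includes $r=n(q-1)$, where $\cR_q(n(q-1),n)=\FF_q^{q^n}$ is not an extended cyclic code, the set $Z_r$ is not defined, and the inequality $\wt_q(q^n-1)=n(q-1)\geqslant r+1$ you invoke fails. That endpoint should be checked separately: $|W_{n(q-1)}|=q^n$ (all digit tuples), which agrees with $\dim\FF_q^{q^n}$ and with the closed form since the coefficient of $x^{n(q-1)}$ in $(1+x+\cdots+x^{q-1})^n/(1-x)$ is the total coefficient sum $q^n$. With that caveat noted, the argument is complete.
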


Here we give the dimension of our new codes by counting their zeros.

\begin{theorem}
    For any $r$ such that $0\leqslant r< n(q-1)$ and $I\subset M_{r}$, 
    \begin{align*}
        \dim(\cC_q(r,I,n))&=\sum_{i=0}^n(-1)^i{n\choose i}{n+r-iq\choose r-iq}\\
        &\ \ -2\sum_{k\in\overline{I}\atop k\neq0}\left(\sum_{i=0}^{m}(-1)^i{m\choose i}{\frac{n(q-1)-r-k}{2}-iq+m-1\choose \frac{n(q-1)-r-k}{2}-iq}\right) \left(\sum_{i=0}^{m}(-1)^i{m\choose i}{\frac{n(q-1)-r+k}{2}-iq+m-1\choose \frac{n(q-1)-r+k}{2}-iq}\right)\\
        &\ \ -\mathbf{1}_{\overline{I}}(0)\left(\sum_{i=0}^{m}(-1)^i{m\choose i}{\frac{n(q-1)-r}{2}-iq+m-1\choose \frac{n(q-1)-r}{2}-iq}\right)^2,
    \end{align*}
    where $\mathbf{1}_{\overline{I}}(0)=1$ if $0\in\overline{I}$ and $\mathbf{1}_{\overline{I}}(0)=0$ otherwise.
\end{theorem}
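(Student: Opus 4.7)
The plan is to reduce the computation to counting the zeros of $\cC_q(r,I,n)^*$ and then to invoke the explicit formulas already available in Lemmas~\ref{theta^r_k} and~\ref{dim(R)}.

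First, I would observe that the extension map $\cC \mapsto \widehat{\cC}$ is an $\FF_q$-linear bijection, so
\[
\dim(\cC_q(r,I,n)) \;=\; \dim(\cC_q(r,I,n)^*) \;=\; (q^n-1) - |Z_{r,I}|,
\]
since $\cC_q(r,I,n)^*$ is a cyclic code of length $q^n-1$ whose zero set is $Z_{r,I}$. Applying the same identity to $\cR_q(r,n)$ and invoking Lemma~\ref{dim(R)} yields
\[
(q^n-1) - |Z_r| \;=\; \dim(\cR_q(r,n)) \;=\; \sum_{i=0}^n (-1)^i \binom{n}{i}\binom{n+r-iq}{r-iq}.
\]

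Next I would unpack the definition of $Z_{r,I}$. By construction
\[
Z_{r,I} \;=\; Z_r \;\cup\; \bigcup_{k \in \overline{I}} \Theta^{(r)}_k.
\]
Elements of $Z_r$ have $q$-weight at most $n(q-1)-r-1$ whereas elements of $\Theta^{(r)}_k$ have $q$-weight exactly $n(q-1)-r$, so $Z_r$ is disjoint from every $\Theta^{(r)}_k$; the sets $\Theta^{(r)}_k$ for distinct $k$ are also pairwise disjoint by their defining conditions on $|O(u)-E(u)|$. Hence the above union is disjoint and
\[
\dim(\cC_q(r,I,n)) \;=\; \dim(\cR_q(r,n)) \;-\; \sum_{k \in \overline{I}} |\Theta^{(r)}_k|.
\]

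Finally, I would substitute the explicit value of $|\Theta^{(r)}_k|$ from Lemma~\ref{theta^r_k}, splitting the sum according to whether $k=0$ or $k \neq 0$: the case $k=0$ contributes exactly the $\mathbf{1}_{\overline{I}}(0)$-term (with a single squared factor), while the cases $k \neq 0$ account for the factor of $2$ in the announced formula. There is no substantial obstacle beyond this bookkeeping; the only point meriting care is the disjointness of the union defining $Z_{r,I}$, which follows at once from the incompatible $q$-weight conditions on $Z_r$ and $\Theta^{(r)}$.
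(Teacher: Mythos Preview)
Your proposal is correct and follows essentially the same route as the paper's own proof: both use $\dim(\cC_q(r,I,n))=\dim(\cC_q(r,I,n)^*)= (q^n-1)-|Z_{r,I}|$, compare with the analogous identity for $\cR_q(r,n)$, observe that the union $Z_{r,I}=Z_r\cup\bigcup_{k\in\overline{I}}\Theta^{(r)}_k$ is disjoint, and then substitute the formulas from Lemma~\ref{theta^r_k} and Lemma~\ref{dim(R)}. If anything, your version is slightly more explicit about why the union is disjoint, but the argument is the same.
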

\begin{proof}
    For a cyclic code $\cC$ of length $l$ we have $\dim(\cC)+|Z(\cC)|=l$ where $Z(\cC)$ is the zero set of $\cC$. 
    %For $\cR_q(r,n)^*$ or $\cC_q(r,I,n)^*$, the number of its zeros plus its dimension is equal to $N$.
    So
    $$\dim(\cC_q(r,I,n)^*)+|Z_r|+\sum_{k\in\overline{I}}\big|\Theta_k^{(r)}\big|=\dim(\cR_q(r,n)^*)+|Z_r|=N.$$
    %On the other hand, we have 
    %$$\dim(\cR^*_q(r,n))+|Z_r|=N.$$
    Then
    $$\dim(\cC_q(r,I,n))=\dim(\cC_q(r,I,n)^*)=\dim(\cR_q(r,n))-\sum_{k\in\overline{I}}\big|\Theta_k^{(r)}\big|,$$
    Now the result follows from Lemma~\ref{theta^r_k} and Lemma~\ref{dim(R)}.
\end{proof}

In the following we give an example of our new code.
\begin{example}
Let $q=3,m=2,n=4,r=5$, $I=\{1\}\subseteq M_5=\{1,3\}$ and $\overline{I}=\{3\}$. 
Let $\alpha$ be a primitive element of $\FF_{3^4}$. It's not hard to compute that $\Theta ^{(5)}_{3}=\{\alpha^{11},\alpha^{33},\alpha^{19},\alpha^{57}\}$.
The zeros of $\cR_3(4,4)^*$, $\cR_3(5,4)^*$ and $\cC_3(5,I,4)^*$ are 
\begin{align*}
    Z_4&=\{\alpha^u\ |\ 0<u\leqslant 80, \wt_3(u)\leqslant 3\}\\
       &=\{\alpha^1,\alpha^3,\alpha^9,\alpha^{27},\\
       &\qquad\alpha^2,\alpha^6,\alpha^{18}, \alpha^{54},\alpha^4,\alpha^{12},\alpha^{36},\alpha^{28},\alpha^{10},\alpha^{30},\\
       &\qquad\alpha^5,\alpha^{15},\alpha^{45},\alpha^{55},\alpha^7,\alpha^{21},\alpha^{63},\alpha^{29},\alpha^{11},\alpha^{33},\alpha^{19},\alpha^{57},\alpha^{13},\alpha^{39},\alpha^{37},\alpha^{31}\},\\
    Z_5&=\{\alpha^u\ |\ 0<u\leqslant 80, \wt_3(u)\leqslant 2\}\\
       &=\{\alpha^1,\alpha^3,\alpha^9,\alpha^{27},\\
       &\qquad\alpha^2,\alpha^6,\alpha^{18}, \alpha^{54},\alpha^4,\alpha^{12},\alpha^{36},\alpha^{28},\alpha^{10},\alpha^{30}\},\\
    Z_5\cup\Theta ^{(5)}_{3} &=\{\alpha^u\ |\ 0<u\leqslant 80, \wt_3(u)\leqslant 2 \mbox{ and } \wt_3(u)=3, |O(u)-E(u)|=3\}\\
    &=\{\alpha^1,\alpha^3,\alpha^9,\alpha^{27},\\
       &\qquad\alpha^2,\alpha^6,\alpha^{18}, \alpha^{54},\alpha^4,\alpha^{12},\alpha^{36},\alpha^{28},\alpha^{10},\alpha^{30},\\
       &\qquad \alpha^{11},\alpha^{33},\alpha^{19},\alpha^{57}\},
\end{align*}
respectively. Hence 
$$ \cR_3(4,4)^* \subset \cC_3(5,I,4)^* \subset \cR_3(5,4)^*.$$
By Lemma~\ref{dim(R)}, we obtain
$$\dim(\cR_3(4,4))=\sum_{i=0}^4(-1)^i{4\choose i}{8-3i\choose 4-3i}=50,$$
$$\dim(\cR_3(5,4))=\sum_{i=0}^4(-1)^i{4\choose i}{9-3i\choose 5-3i}=66.$$
By Lemma~\ref{theta^r_k} we have
$$|\Theta ^{(5)}_{3}|=2\left(\sum_{i=0}^{2}(-1)^i{2\choose i}{0-3i+1\choose 0-3i}\right) \left(\sum_{j=0}^{2}(-1)^j{2\choose j}{3-3j+1\choose 3-3j}\right)=4.$$
Hence $$\dim(\cC_3(5,I,4))=\dim(\cR_3(5,4))-|\Theta ^{(5)}_{3}|=62.$$
\end{example}

\begin{theorem}\label{affine_invariant_extended}
  For any $0\leqslant r<{n}(q-1)$ and $I\subset M_{r}$, 
  the code ${\cC_q}(r,I,{n})$ is affine-invariant.
\end{theorem}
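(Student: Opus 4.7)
The plan is to invoke Lemma~\ref{affine_invariant}, which reduces the claim to showing that the defining set $T$ of $\cC_q(r,I,n)$, viewed as a subset of $S=[0,q^n-1]$, is closed under the partial order $\preceq$, that is, $\Delta(T)=T$. Since $\cC_q(r,I,n)$ is the extension of the cyclic code with zero set $Z_{r,I}=Z_r\cup\bigcup_{k\in\overline I}\Theta^{(r)}_k$, unpacking Notation~\ref{wtq} gives
\[
 T \;=\; \{0\}\,\cup\,\{\,u\in[1,q^n-1]:\wt_q(u)\leq n(q-1)-r-1\,\}\,\cup\,\{\,u\in[0,q^n-1]:\wt_q(u)=n(q-1)-r,\ |O(u)-E(u)|\in\overline I\,\}.
\]

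Next I would verify $\Delta(T)\subseteq T$ by taking an arbitrary $t\in T$ and $s\preceq t$, and splitting into three cases. If $t=0$, then $s=0\in T$. If $t$ lies in the $Z_r$-part, so $\wt_q(t)\leq n(q-1)-r-1$, then the coordinatewise inequality $s_i\leq t_i$ gives $\wt_q(s)\leq\wt_q(t)\leq n(q-1)-r-1$; hence $s$ is either $0$ or sits in the same $Z_r$-part of $T$. Finally, if $t\in\Theta^{(r)}_k$ for some $k\in\overline I$, then $\wt_q(t)=n(q-1)-r$; either $s=t\in T$, or $s\prec t$ strictly, which forces $\wt_q(s)<\wt_q(t)=n(q-1)-r$, i.e., $\wt_q(s)\leq n(q-1)-r-1$, placing $s$ back in the $Z_r$-part of $T$. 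The reverse inclusion $T\subseteq\Delta(T)$ is immediate from $t\preceq t$, so $\Delta(T)=T$ and Lemma~\ref{affine_invariant} yields the theorem.

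There is no real computational obstacle. The key conceptual observation is that although the subsets $\Theta^{(r)}_k$ are carved out by the rather delicate statistic $|O(u)-E(u)|$, all their elements share a single $q$-weight $n(q-1)-r$; hence any $s\prec t$ strictly below some such $t$ must drop in $q$-weight to at most $n(q-1)-r-1$ and therefore is automatically captured by $Z_r$, regardless of which $k\in\overline I$ was chosen. Consequently the choice of $I\subset M_r$ is irrelevant for affine invariance, and the entire verification reduces to the monotonicity of $\wt_q$ under $\preceq$.
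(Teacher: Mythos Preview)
Your proof is correct and follows essentially the same route as the paper: both reduce to Lemma~\ref{affine_invariant} and verify $\Delta(T_{r,I})=T_{r,I}$ via the observation that $s\preceq t$ implies $s=t$ or $\wt_q(s)<\wt_q(t)$, which forces any strict predecessor of an element of $T_{r,I}$ back into the $Z_r$-part. The paper's argument is simply a terser version of your case split.
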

\begin{proof}
  Let $Z_{r,I}$ be the zero set of ${\cC_q}(r,I,{n})^*$, and $T_{r,I}$ the defining set of $C_q(r,I,n)$, where 
  $$T_{r,I}=\{i\mid 1\leqslant i\leqslant q^n-1, \alpha^i\in Z_{r,I}\}\cup\{0\}.$$ 
  For any $t\in T_{r,I}$ and $s\in [0,q^n-1]$, if $s\preceq t$ then we have $s=t$ or $\wt_q(s)<\wt_q(t)$, which shows that $s\in T_{r,I}$ according to the definition of $Z_{r,I}$. Hence $\Delta(T_{r,I})=T_{r,I}$. Now by Lemma~\ref{affine_invariant}, the code $C_q(r,I,n)$ is affine-invariant.
\end{proof}

\section{Some properties of sandwiched Reed--Muller codes}\label{sec_properties}

Now we investigate the dual codes and the minimum distances of our new sandwiched Reed--Muller codes.

\subsection{Dual codes}
The dual codes of generalized RM codes first appeared in unpublished notes of S. Lin,
then was given in~\cite[Theorem 2.2.1]{delsarte1970generalized},
$$\cR_q (r,{n})^\perp=\cR_q (n(q-1)-r-1,{n}),\ 0\leqslant r< n(q-1).$$
Here we prove the following result on the dual of sandwiched RM codes.

\begin{theorem}
For $0\leqslant r< n(q-1)$, $I\subseteq M_r$ and $\overline{I}:=M_r\setminus I$, we have 
$$\cC_q(r,I,n)^\perp=\cC_q(n(q-1)-r,\overline{I},n).$$
\end{theorem}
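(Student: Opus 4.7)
The plan is to compute the defining set of $\cC_q(r,I,n)^\perp$ explicitly and match it against that of $\cC_q(n(q-1)-r,\overline{I},n)$. My tool is the standard duality formula for extended cyclic codes: if $\widehat{\cC}$ has (extended) defining set $T \subseteq [0,q^n-1]$, then $\widehat{\cC}^\perp$ has defining set $T^\perp = [0,q^n-1] \setminus \phi(T)$, where $\phi(u) := q^n - 1 - u$ is the order-reversing involution on $[0,q^n-1]$. This formula specialises, via the usual $q$-weight computation, to the cited identity $\cR_q(r,n)^\perp = \cR_q(n(q-1)-r-1,n)$, which is reassuring.

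Identifying $\alpha^u$ with $u$, and noting that $0$ always lies in the extended defining set for $r < n(q-1)$, the defining set of $\cC_q(r,I,n)$ equals
\[
T \;=\; \{u \in [0, q^n-1] : \wt_q(u) \leq n(q-1)-r-1\} \;\cup\; \bigcup_{k\in \overline{I}} \Theta^{(r)}_k.
\]
The two identities in \eqref{wt_qn-1}, $\wt_q(q^n-1-u) = n(q-1) - \wt_q(u)$ and $O(q^n-1-u) - E(q^n-1-u) = E(u) - O(u)$, show that $\phi$ bijects $\{u : \wt_q(u) \leq n(q-1)-r-1\}$ with $\{u : \wt_q(u) \geq r+1\}$, and bijects $\Theta^{(r)}_k$ with $\Theta^{(r')}_k$, where $r' := n(q-1) - r$. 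Since $n = 2m$ is even, $M_r = M_{r'}$ and $\{u : \wt_q(u) = r\} = \bigsqcup_{k \in M_{r'}} \Theta^{(r')}_k$, so taking the complement in $[0, q^n-1]$ gives
\[
T^\perp \;=\; \{u : \wt_q(u) \leq r - 1\} \;\cup\; \bigcup_{k \in I} \Theta^{(r')}_k.
\]

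In the generic range $1 \leq r' < n(q-1)$, this is immediately the defining set of $\cC_q(r',\overline{I},n)$ and the proof is complete. The boundary case $r = 0$ (so $r' = n(q-1)$) falls out of the very same formula: $\{u : \wt_q(u) \leq -1\} = \emptyset$ and $\Theta^{(n(q-1))}_k = \{0\}$ if $k = 0$ and is empty otherwise, so $T^\perp = \{0\}$ when $0 \in I$ (yielding $\cR_q(n(q-1)-1,n)$) and $T^\perp = \emptyset$ otherwise (yielding $\FF_q^{q^n}$); both outcomes agree with the piecewise definition of $\cC_q(n(q-1),\overline{I},n)$. The principal obstacle is really only bookkeeping: tracking $0$ and $q^n-1$ through the various index sets on both sides, and confirming that $\phi$ genuinely exchanges the strata $\Theta^{(r)}_k$ and $\Theta^{(r')}_k$, which hinges on the parity-preserving identity in \eqref{wt_qn-1} together with the evenness of $n$ (which ensures $M_r = M_{r'}$).
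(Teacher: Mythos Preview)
Your proof is correct and follows essentially the same route as the paper: both invoke the standard duality formula for extended cyclic codes (the paper phrases it via the zero set $\{\beta^{-1}:\beta\in\FF_{q^n}^*\setminus(Z_{r,I}\cup\{\alpha^0\})\}$, you via the defining-set involution $\phi(u)=q^n-1-u$, which is the same map on exponents), and both rely on the identities in~\eqref{wt_qn-1} to carry $Z_{r,I}$ to $Z_{n(q-1)-r,\overline{I}}$. Your write-up is more explicit---in particular you spell out why $M_r=M_{r'}$ and you handle the boundary $r=0$ separately---but the underlying argument is the same.
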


\begin{proof}
	Recall that the zero set of $\cC_q(r,I,n)^*$ is $Z_{r,I}$.
  It is well-known that (c.f. \cite[Definition 2.7]{charpin1998open}) the dual code of $\cC_q(r,I,n)$ is also an extended cyclic code.
  More precisely, $\cC_q(r,I,n)^\perp$ is the extended code of the cyclic code whose zeros are 
  $$ \big\{\beta^{-1}\ |\ \beta\in \FF_{q^n}^*\setminus (Z_{r,I}\cup\{\alpha^0\})\big\},$$
  and by Equation~\eqref{wt_qn-1}
  %Showing in detail, 
  %$$\FF_{q^n}^*\setminus (Z_{r,I}\cup\{\alpha^0\})=\{\alpha^u\ |\ 0<u\leqslant q^n-1,\ n(q-1)-r+1\leqslant\wt_q(u)\leqslant n(q-1)-1\}
  %\cup \{\alpha^u\ |\ 0<u\leqslant q^n-1,\ \wt_q(u)= n(q-1)-r,\ |O-E|=k,\ k\in I\},$$
  it is not hard to check that this zero set is equal to $Z_{n(q-1)-r,\overline{I}}$.
  %=\{\alpha^u\ |\ 0<u\leqslant q^n-1,\ 0<\wt_q(u)\leqslant r-1\}
  %\cup \{\alpha^u\ |\ 0<u\leqslant q^n-1,\ \wt_q(u)= r,\ |O-E|=k,\ k\in I\}.$$
  This concludes our proof.
\end{proof}

\subsection{Minimum distances}
Now we show the following lower bounds on the minimum distances of sandwiched RM codes.

\begin{theorem}\label{thm_distance}
  For $0\leqslant r \leqslant {n}(q-1)-2$, write $r=\rho(q-1)+s$, where $0\leqslant \rho\leqslant n-1, 0\leqslant s<q-1$.
  Suppose $I\subseteq M_{r+1}$.  
  %where $r=\rho(q-1)+s$, $0\leqslant \rho\leqslant n-1, 0\leqslant s<q-1$, we have \footnote{We use $\{ q-1-s\pm1\}$ to denote  $\{ q-1-s+1, q-1-s-1\}$ for convenience.}
  \begin{enumerate}
    \item If $\rho=n-1$ and $\{q-1-s-1\} \cap I = \emptyset$, then $\dist(\cC_q(r+1,I,{n}))=q-s$.
    \item For $0\leqslant \rho\leqslant n-2$, we have\footnote{We use $\{ q-1-s\pm1\}$ to denote  $\{ q-1-s+1, q-1-s-1\}$ for convenience.}
  $$
\dist(\cC_q(r+1,I,{n}))
\begin{cases}
  =(q-s)q^{{n}-\rho-1} & \textup{if~} \rho \textup{\ is\ odd\ and\ } \{q-1-s\pm1\} \cap I = \emptyset, \textup{\ or}\\
  &\quad \textup{if~} \rho \textup{\ is\ even\ and\ } \{|s\pm1|\} \cap I = \emptyset\\
	\geqslant (q^2-qs-1)q^{{n}-\rho-2} & \textup{if } \rho \textup{\ is\ odd\ and\ } \{q-1-s\pm1\} \cap I= \{q-1-s+1\}, \textup{\ or}\\
&\quad \textup{if~} \rho \textup{\ is\ even,\ } s\neq 0\textup{ and } \{|s\pm1|\}\cap I= \{|s-1|\}.
\end{cases}
$$
  \end{enumerate}
\end{theorem}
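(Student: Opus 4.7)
The plan splits into an upper-bound argument from the sandwich inclusion and a lower-bound argument via the BCH bound for extended cyclic codes. Remark~\ref{CIdetails}(a) gives $\cR_q(r,n)\subseteq\cC_q(r+1,I,n)$, so Theorem~\ref{RM_min_dist} yields $\dist(\cC_q(r+1,I,n))\leqslant(q-s)q^{n-\rho-1}$. This already gives the claim in case~(1) (since $n-\rho-1=0$) and the upper bound in the first subcase of case~(2); what remains is the matching lower bound.

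For the lower bound, I will show that a long initial run $\{0,1,\ldots,L-1\}$ lies in the defining set of $\cC_q(r+1,I,n)$ and invoke the extended (Vandermonde) BCH bound to conclude $\dist\geqslant L+1$. The defining set of $\cC_q(r+1,I,n)$ consists of $\{0\}$, of all $u\in[1,q^{n}-1]$ with $\wt_q(u)\leqslant n(q-1)-r-2$, and of all $u$ with $\wt_q(u)=n(q-1)-r-1$ and $|O(u)-E(u)|\in\overline{I}:=M_{r+1}\setminus I$. A direct digit enumeration shows that the integers $u\in[1,(q-s)q^{n-\rho-1}-2]$ with $\wt_q(u)=n(q-1)-r-1$ are exactly
\[
u^{\star}:=(q-s-1)q^{n-\rho-1}-1\quad\text{and}\quad u^{(i)}:=(q-s)q^{n-\rho-1}-1-q^{i},\ i\in\{0,\ldots,n-\rho-2\};
\]
all other integers in this range automatically sit in the defining set.

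The core calculation is $|O(u)-E(u)|$ for each candidate. Since $n=2m$ is even, the parity of $\rho$ controls whether the top position $n-\rho-1$ is even or odd. A direct tally yields: for $\rho$ odd, $|O-E|=q-s-2=(q-1-s)-1$ at $u^{\star}$ and at $u^{(i)}$ with $i$ even, while $|O-E|=q-s=(q-1-s)+1$ at $u^{(i)}$ with $i$ odd; for $\rho$ even, $|O-E|=s+1$ at $u^{\star}$ and at $u^{(i)}$ with $i$ odd, while $|O-E|=|s-1|$ at $u^{(i)}$ with $i$ even. In both regimes the two values form exactly the pair $\{q-1-s\pm1\}$ or $\{|s\pm1|\}$ appearing in the hypothesis on $I$.

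Consequently, when the hypothesis forces both members of the pair into $\overline{I}$, every boundary integer lies in the defining set, $L=(q-s)q^{n-\rho-1}-1$, and BCH gives $\dist\geqslant(q-s)q^{n-\rho-1}$, matching the upper bound. When only the value specified in the theorem lies in $\overline{I}$, the excluded $u^{(i)}$ are those on the other side of the pair, and the smallest such occurs at $i=n-\rho-2$ (which has the required parity in both cases, since $n-\rho-2$ is odd for $\rho$ odd and even for $\rho$ even), namely $(q-s)q^{n-\rho-1}-1-q^{n-\rho-2}=(q^{2}-qs-1)q^{n-\rho-2}-1$, whence $\dist\geqslant(q^{2}-qs-1)q^{n-\rho-2}$. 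Case~(1) is the degenerate specialization in which only $u^{\star}=q-s-2$ survives. The main technical obstacle will be the parity-sensitive bookkeeping of even and odd positions in the $q$-ary expansion; the reversal of roles between the top digit $u_{n-\rho-1}$ and the low-digit indices for odd versus even $\rho$ is precisely why the hypotheses $\{q-1-s\pm1\}\cap I=\emptyset$ and $\{|s\pm1|\}\cap I=\emptyset$ appear as mirror images in the statement.
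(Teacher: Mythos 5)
Your proposal is correct and follows essentially the same approach as the paper: bound above via the sandwich inclusion and Theorem~\ref{RM_min_dist}; enumerate the integers $u$ in $[1,(q-s)q^{n-\rho-1}-2]$ with $\wt_q(u)=n(q-1)-r-1$ (i.e., $u^\star$ together with $u^{(i)}$ for $i\in[0,n-\rho-2]$); compute $|O(u)-E(u)|$ case by case according to the parity of $\rho$; and then apply a BCH-type bound plus the affine-invariance/translation-invariance step to pass from the punctured code to the extended one. The only cosmetic difference is that you package the last step as an ``extended Vandermonde BCH bound'' on $\widehat{\cC}$, whereas the paper applies BCH to $\cC^*$ and then cites translation invariance (\cite[Corollary 5.26]{assmus1998polynomial}); the numerics and conclusions are identical.
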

\begin{proof}
  By (a) of Remark~\ref{CIdetails} and Theorem~\ref{RM_min_dist},
	$$ \dist(\cR_q (r+1,{n}))\leqslant\dist(\cC_q(r+1,I,{n}))\leqslant\dist(\cR_q (r,{n}))=(q-s)q^{{n}-\rho-1}. $$
  We first review the minimum distance of $\cR_q (r,{n})$. 
  Consider the set
	$$Z=\{\alpha^u\ |\ 0<u<(q-s)q^{{n}-\rho-1}-1\}.$$
  Note that 
  \begin{align*}
        (q-s)q^{{n}-\rho-1}-1=
  \begin{cases}
       q-s-1 & \textup{if } \rho=n-1 \\
       (q-s-1)q^{n-\rho-1}+\sum_{i=0}^{n-\rho-2}(q-1)q^i & \textup{if } 0\leqslant\rho\leqslant n-2
  \end{cases}
  \end{align*}
  which implies that for $0<u<(q-s)q^{{n}-\rho-1}-1$, we have $0<\wt_q(u)\leqslant {n}(q-1)-r-1$. 
  So $Z$ is a subset of $Z_r$, the zero set of $\cR_q (r,{n})^*$. Then the BCH bound gives 
  $$\dist(\cR_q (r,{n})^*)\geqslant (q-s)q^{{n}-\rho-1}-1.$$ 
  As $\cR_q (r,{n})$ is affine invariant, we have by translation invariance (c.f.~\cite[Corollary 5.26]{assmus1998polynomial})
  $$\dist(\cR_q (r,{n}))\geqslant (q-s)q^{{n}-\rho-1}.$$ 

  Recall that the zero set of $\cC_q(r+1,I,{n})^*$ is  
  $$Z_{r+1,I} = Z_{r} \bigg\backslash\bigg(\bigcup _{k\in I} \Theta ^{(r+1)}_{k} \bigg).$$

  Suppose that $0<u<(q-s)q^{{n}-\rho-1}-1$ and $\wt_q(u)={n}(q-1)-r-1$.
  It is not hard to verify that there exists some $j, 0\leqslant j\leqslant n-\rho-1$ such that
  \begin{align*}
  u=(q-s)q^{{n}-\rho-1}-1-q^j=
  \begin{cases}
  q-s-1-1 & \textup{if } \rho=n-1 \\
  (q-s-1)q^{n-\rho-1}-q^j+\sum_{i=0}^{n-\rho-2}(q-1)q^i  & \textup{if } 0\leqslant\rho\leqslant n-2.
  \end{cases}
  \end{align*}
  Case (i): If $\rho$ is odd, then 
  $$E(u)-O(u)=
  \begin{cases}
      q-s-1-1 & \textup{if } \rho=n-1\\
      q-s-1\pm1 & \textup{if } \rho\leqslant n-3.
  \end{cases}
  $$
  So if $\rho=n-1$ and $\{q-1-s-1\} \cap I = \emptyset$ then $Z\subseteq Z_{r+1,I}$.
  Also for $0\leqslant \rho\leqslant n-3$, if $\{q-s-1\pm1\}\cap I=\emptyset$, then $Z\subseteq Z_{r+1,I}$. 
  Using a similar argument as above, we can conclude that $\dist(\cC_q(r+1,I,{n}))=\dist(\cR_q (r,{n}))$. 
  On the other hand, we can verify that the minimal $u$ such that
  $E(u)-O(u)=q-s-1+1$ is 
  $$u=\sum_{i=0}^{n-\rho-3}(q-1)q^i+(q-2)q^{{n}-\rho-2}+(q-s-1)q^{{n}-\rho-1}=(q^2-qs-1)q^{{n}-\rho-2}-1,$$
  where $0\leqslant \rho\leqslant n-3$. Therefore, if $\{q-s-1\pm1\}\cap I=\{q-s-1+1\}$ then
	\begin{align*}
		\dist(\cC_q(r+1,I,{n})^*)&\geqslant(q^2-qs-1)q^{{n}-\rho-2}-1.
	\end{align*}
  As $\cC_q (r+1,I,{n})$ is also affine invariant, we have by translation invariance
  $$\dist(\cC_q(r+1,I,{n}))\geqslant(q^2-qs-1)q^{{n}-\rho-2}.$$

  Case (ii): If $\rho$ is even, then $\rho\leqslant n-2$ and $E(u)-O(u)=s\pm1$. Similarly, if $\{|s\pm1|\}\cap I=\emptyset$, we have $Z\subseteq Z_{r+1,I}$
  and $\dist(\cC_q(r+1,I,{n}))=\dist(\cR_q (r,{n}))$.
  Note that the minimal $u$ such that $E(u)-O(u)=s-1$ is also $(q^2-qs-1)q^{n-\rho-2}-1.$
	Therefore, if $s\neq 0$ and $\{|s\pm1|\}\cap I=\{|s-1|\}$, then similarly as above we have
	\begin{align*}
	\dist(\cC_q(r+1,I,n)^*)&\geqslant(q^2-qs-1)q^{n-\rho-2}-1,\\
	\dist(\cC_q(r+1,I,n))&\geqslant(q^2-qs-1)q^{n-\rho-2}.
	\end{align*}
\end{proof}

\begin{example}
    Let $q=3,m=2$ and $n=4$. For $0\leqslant r\leqslant 6$ and $I\subseteq M_{r+1}$, we use $[N,K,D]$ to denote the length, dimension and minimum distance of the code $\mathcal{C}_3(r+1,I,4)$. 
    With the aid of Magma program, we list the parameters of these codes in Tables I-II. 
     \begin{table}[!h]
    	\caption{The parameters of $\mathcal{C}_3(r+1,I,4)$ with even $r$}
    	\centering
    	\begin{tabular}{c|c|c|c|c}
    		\hline
    		\diagbox{$I$}{$[N,K,D]$}{$r$} & 0 & 2 & 4 & 6\\
    		\hline
    		\{1\} & [81,5,54] & [81,27,18] & [81,62,6] & [81,80,2]\\
    		\hline
    		\{3\} & [81,1,81] & [81,19,27] & [81,54,9] & [81,76,3]\\
    		\hline
    		\{1,3\} & [81,5,54] & [81,31,18] & [81,66,6] & [81,80,2]\\
    		\hline
    	\end{tabular}
    \end{table}

\begin{table}[!h]
	\caption{The parameters of $\mathcal{C}_3(r+1,I,4)$ with odd $r$}
	\centering
	\begin{tabular}{c|c|c|c}
		\hline
		\diagbox{$I$}{$[N,K,D]$}{$r$} & 1 & 3 & 5\\
		\hline
		\{0\} & [81,9,45] & [81,40,9] & [81,70,5]\\
		\hline
		\{2\} & [81,11,36] & [81,39,16] & [81,72,4]\\
		\hline
		\{4\} & [81,5,54] & [81,33,18] & [81,66,6]\\
		\hline
		\{0,2\} & [81,15,27] & [81,48,9] & [81,76,3]\\
		\hline
		\{0,4\} & [81,9,45] & [81,42,9] & [81,70,5]\\
		\hline
		\{2,4\} & [81,11,36] & [81,41,16] & [81,72,4]\\
		\hline
		\{0,2,4\} & [81,15,27] & [81,50,9] & [81,76,3]\\
		\hline
	\end{tabular}
\end{table}
\end{example}

\section{Minimum vectors of sandwiched Reed--Muller codes}\label{sec_minimum_vectors}
In this section, we determine the minimum vectors of $\cC_q(r,I,n)$ for some special cases.
We split our discussion into two subsections as follows.

\subsection{Case (i): $r=\rho(q-1)+1,0\leqslant \rho\leqslant n-1$.}
Throughout this subsection, we assume that $r=\rho(q-1)+1,0\leqslant \rho\leqslant n-1$ and
$$
\begin{cases}
    \{q, q-2, |q-4|\} \cap I=\emptyset & \textup{if $\rho$ is odd},\\
    \{1, 3\} \cap I=\emptyset & \textup{if $\rho$ is even}.
\end{cases}
$$
(Note that this assumption is slightly stronger than that of Theorem~\ref{thm_distance}.)
By Theorem~\ref{thm_distance}, we have 
$$\dist (\cC_q(r,I,n) )= \dist (\cR_q (r-1,n))=q^{n-\rho}.$$
By (a) of Remark~\ref{CIdetails}, the minimum vectors of $\cR_q (r-1,n)$ are also minimum vectors of $\cC_q(r,I,n)$.
Furthermore, we now show that these are exactly the minimum vectors of $\cC_q(r,I,n)$.

\begin{proposition}\label{Locator polynomial}
  Let $r=\rho(q-1)+1,0\leqslant \rho\leqslant n-1$ and assume that
  $$
  \begin{cases}
  	\{q, q-2, |q-4|\} \cap I=\emptyset & \textup{if $\rho$ is odd},\\
  	\{1, 3\} \cap I=\emptyset & \textup{if $\rho$ is even}.
  \end{cases}
  $$
Let $\cC_q=\cC_q(r,I,n)^*$ with length $N=q^{n}-1$.
Let $\delta=q^{n-\rho}-1$, let
$\cI_{n-\rho}=\{q^{n-\rho}-q^j \mid j\in[0,n-\rho-1]\}$
and let $\cS_{\cC_q}(\delta)$ be the newton identities system \eqref{algebraic_system}, written for the codewords of weight $\delta$ of $\cC_q$.
Then, any solution $(\Lambda_1,\cdots,\Lambda_N,\sigma_1,\cdots,\sigma_{\delta})$ of $\cS_{\cC_q}(\delta)$ satisfies the following:
\begin{itemize}
	\item[(i)]
	$\Lambda_1=\Lambda_2=\cdots=\Lambda_{\delta-1}=0.$
	\item[(ii)]
	Let $u\in [1,\delta-1]$.
	\begin{itemize}
		\item[(a)]
		If $u\notin \cI_{n-\rho}$, then $\Lambda_{\delta+u}=0$ and $\sigma_u=0$.
		\item[(b)]
		If $u\in \cI_{n-\rho}$, then $\sigma_u=-\Lambda_{\delta+u}/\Lambda_{\delta}$.
	\end{itemize}
	% If $u\notin \cI_{n-r+1}$, then $\Lambda_{\delta+u}=0$ and $\sigma_u=0$. If $u\in \cI_{n-r+1}$, then $\sigma_u=\Lambda_{\delta+u}/\Lambda_{\delta}$.
	\item[(iii)]
	$$\sigma(X)=1-\sum_{u\in\cI_{n-\rho}} \frac{\Lambda_{\delta+u}}{\Lambda_{\delta}}X^u.$$
\end{itemize}
\end{proposition}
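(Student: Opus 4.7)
Proof plan. We follow Corollary~3.8: fix a codeword $x$ of weight $\delta=q^{n-\rho}-1$ in $\cC_q:=\cC_q(r,I,n)^*$ and let $(\Lambda_0,\ldots,\Lambda_{N-1},\sigma_1,\ldots,\sigma_\delta)$ be the corresponding solution of $\cS_{\cC_q}(\delta)$; the goal is to constrain $\sigma(X)$ to the sparse form of Proposition~\ref{THE ROOT}. For part~(i), a direct defining-set check suffices: each $i\in[1,\delta-1]$ satisfies $\wt_q(i)\le(n-\rho)(q-1)-1=n(q-1)-r$, with equality only at the $n-\rho$ borderline values $i=q^{n-\rho}-1-q^j$. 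Strict inequality gives $\alpha^i\in Z_{r-1}\subseteq Z_{r,I}$; in the borderline case a digit-by-digit computation yields $|O(i)-E(i)|=1$ when $\rho$ is even and $|O(i)-E(i)|\in\{q-2,q\}$ when $\rho$ is odd, and the hypothesis on $I$ places these in $\overline{I}$, so $\alpha^i\in\Theta^{(r)}_{|O(i)-E(i)|}\subseteq Z_{r,I}$. Either way $\Lambda_i=0$.

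For (ii) and (iii), first note $\Lambda_\delta\ne 0$: a nonzero codeword of weight $\delta$ has $\sigma_\delta\ne 0$, and assuming $\Lambda_\delta=0$ together with the Newton identity at $j=0$ and part~(i) would force $\Lambda_0=0$ and then every $\Lambda_k=0$, contradicting $x\ne 0$. We now proceed by strong induction on $u\in[1,\delta-1]$. Using (i), the Newton identity at $j=u$ collapses to
\begin{equation*}
\sigma_u\Lambda_\delta=-\Lambda_{\delta+u}-\sum_{i=1}^{u-1}\sigma_i\Lambda_{\delta+u-i},
\end{equation*}
and the inductive hypothesis (on (ii) for $i<u$) leaves only the indices $i\in\cI_{n-\rho}\cap[1,u-1]$ in the sum:
\begin{equation*}
\sigma_u\Lambda_\delta^2=-\Lambda_\delta\Lambda_{\delta+u}+\sum_{i\in\cI_{n-\rho},\,1\le i<u}\Lambda_{\delta+i}\Lambda_{\delta+u-i}.
\end{equation*}
We must show this right-hand side vanishes when $u\notin\cI_{n-\rho}$ and equals $-\Lambda_\delta\Lambda_{\delta+u}$ when $u\in\cI_{n-\rho}$; equivalently, $\Lambda_{\delta+u}=0$ in the former case, and the cross-product sum vanishes in both.

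The key tool is the identity $\wt_q(\delta+u)=\wt_q(u-1)+1$, valid for $u\in[1,q^{n-\rho}]$. For $u\notin\cI_{n-\rho}$ this yields $\wt_q(\delta+u)\le(n-\rho)(q-1)-1$, with equality only when $u-1$ arises from $q^{n-\rho}-1$ either by deleting two distinct terms $q^{j_1},q^{j_2}$ or by reducing a single term $q^j$ by~$2$ (the latter only for $q\ge 3$). A direct $E-O$ tally in each sub-case pins $|O(\delta+u)-E(\delta+u)|$ to $\{1,3\}$ if $\rho$ is even and to $\{q,q-2,|q-4|\}$ if $\rho$ is odd, exactly the values forbidden from $I$ by the hypothesis, giving $\alpha^{\delta+u}\in Z_{r,I}$ and so $\Lambda_{\delta+u}=0$. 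For the cross-product sum, rewriting $\delta+u-i=u+q^j-1$ for $i=q^{n-\rho}-q^j\in\cI_{n-\rho}$ and invoking the Frobenius relation $\Lambda_{qk\bmod N}=\Lambda_k^q$ (combined with $N$-periodicity) traces each factor back into the range $[1,\delta-1]$ already handled by part~(i), collapsing the sum to zero. Part~(iii) then follows by assembling the $\sigma_u$'s, with the endpoint $u=\delta\in\cI_{n-\rho}$ covered separately by the Newton identity at $j=0$.

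The main obstacle is the delicate combinatorial accounting in the previous paragraph: precisely pinning down the borderline $|O(\delta+u)-E(\delta+u)|$ values on $u\notin\cI_{n-\rho}$ is what dictates the exact hypothesis on $I$, and the Frobenius-driven reduction of the cross-product terms is what allows the induction to close. Without either ingredient, spurious $\sigma_u$ outside $\cI_{n-\rho}$ would appear, breaking the form claimed in~(iii).
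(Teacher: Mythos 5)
Your proof follows the paper's structure closely: part~(i) is a defining-set check, the analysis of $\wt_q(\delta+u)$ via $\wt_q(\delta+u)=\wt_q(u-1)+1$ establishes $\Lambda_{\delta+u}=0$ for $u\notin\cI_{n-\rho}$, and then an induction on the reduced Newton identities. Parts~(i) and the defining-set analysis of $\Lambda_{\delta+u}$ are correct and match the paper.

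The gap is in the step that closes the induction. You write
\begin{equation*}
\sigma_u\Lambda_\delta^2=-\Lambda_\delta\Lambda_{\delta+u}+\sum_{i\in\cI_{n-\rho},\,1\le i<u}\Lambda_{\delta+i}\Lambda_{\delta+u-i}
\end{equation*}
and claim that the cross-product sum vanishes because, via the Frobenius relation $\Lambda_{qk\bmod N}=\Lambda_k^q$, each index $\delta+u-i=u+q^j-1$ can be ``traced back into the range $[1,\delta-1]$'' covered by part~(i). That is not true in general. Take $q=2$, $n=6$, $\rho=3$, so $\delta=7$ and $\cI_{3}=\{4,6,7\}$. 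With $u=6$ and $i=4$, one gets $\delta+u-i=9$, whose $2$-cyclotomic orbit modulo $63$ is $\{9,18,36\}$, which is disjoint from $[1,6]$. So Frobenius cannot reduce this index to the range handled by part~(i); the only thing Frobenius guarantees is that the defining set is a union of cyclotomic cosets, which is not what is needed here.

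What does work (and is what the paper does) is an elementary size bound that feeds into the \emph{other} defining-set result you already proved. For $i=q^{n-\rho}-q^j\in\cI_{n-\rho}$ with $1\le i<u\le\delta-1$, one has
\begin{equation*}
u-i\;<\;(q^{n-\rho}-1)-(q^{n-\rho}-q^j)\;=\;q^j-1\;\le\;q^{n-\rho-1}-1\;<\;(q-1)q^{n-\rho-1}=\min\cI_{n-\rho},
\end{equation*}
so $u-i\in[1,\delta-1]\setminus\cI_{n-\rho}$, and your earlier defining-set analysis of $\Lambda_{\delta+u'}$ for $u'\in[1,\delta-1]\setminus\cI_{n-\rho}$ (not part~(i)) gives $\Lambda_{\delta+(u-i)}=0$. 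In my counterexample above this is exactly what happens: $u'=u-i=2\notin\cI_3$, $\wt_2(9)=2=n(q-1)-r$, and $|O(9)-E(9)|=0\in\overline{I}$, so $\alpha^9\in Z_{r,I}$ and $\Lambda_9=0$. In addition, multiplying by $\Lambda_\delta$ and forming the cross-product sum is an unnecessary detour: once you know that for each $1\le i<u$ either $\sigma_i=0$ or $\Lambda_{\delta+u-i}=0$, the reduced identity $\Lambda_{\delta+u}+\sum_{i=1}^u\sigma_i\Lambda_{\delta+u-i}=0$ collapses directly to $\Lambda_{\delta+u}+\sigma_u\Lambda_\delta=0$. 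Finally, the endpoint $\sigma_\delta$ is handled by the Newton identity at $j=\delta$, namely $\Lambda_{2\delta}+\sum_{i=1}^{\delta-1}\sigma_i\Lambda_{2\delta-i}+\sigma_\delta\Lambda_\delta=0$, where the middle terms vanish by the same mechanism (for $i\in\cI_{n-\rho}$, $\delta-i=q^j-1\notin\cI_{n-\rho}$); the identity at $j=0$ that you cite would instead give $\sigma_\delta=-\Lambda_\delta/\Lambda_0$, which is not the form required in~(iii).
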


\begin{proof}
  %\begin{itemize}
	(i) Recall that the zero set of $\cC_q(r,I,n)^*$ is
	\begin{align*}
		Z_{r,I} =Z_{r} \bigcup \bigg(\bigcup_{k\in \overline{I}} \Theta ^{(r)}_{k}\bigg).
	\end{align*}
	%A vector $\bfx\in{\mathcal C}(r,I,n)^*$ if and only if $\Lambda_u=0$ for all $u\in Z_{r,I}$.
	Note that $\delta=q^{n-\rho}-1=\sum_{i=0}^{n-\rho-1}(q-1)q^i$, so $\wt_q(\delta)=n(q-1)-r+1$,
	and for any $1\leqslant u\leqslant \delta-1$ we have $\wt_q(u)\leqslant n(q-1)-r$.
	If $\wt_q(u)<n(q-1)-r$, then $\Lambda_u=0$ by definition.
	If $\wt_q(u)=n(q-1)-r$, then it is not hard to check that
	$$
	|O(u)-E(u)| =
	\begin{cases}
		q-1\pm1 & \textup{if $\rho$ is odd},\\
		1 & \textup{if $\rho$ is even},
	\end{cases}
	$$
	whence $\alpha^u\in Z_{r,I}$ and $\Lambda_u=0$. Hence for any $1\leqslant u\leqslant \delta-1$ we always have $\Lambda_u=0$, which proves (i).

	(ii) Set $t=\delta+u$ where $u\in [1,\delta-1]$.
	Then we can write
	\begin{align}\label{t_2adic_expression}
		t=\sum_{i=0}^{n-\rho-1}t_{i}q^{i}+q^{n-\rho},\ t_{i}\in[0,q-1].
	\end{align}
	First we can easily check that $\wt_q(t)\leqslant (n-\rho)(q-1)$ for any $\delta<t<2\delta$.

	Assume that $$\wt_q(t)=(n-\rho)(q-1)=n(q-1)-r+1.$$ Then by~\eqref{t_2adic_expression} and $\delta<t<2\delta$ there is only one $t_{j},1\leqslant j\leqslant n-\rho-1$, equal to $q-2$, i.e. $t=\delta+q^{n-\rho}-q^{j}$, with $j\in [1,n-\rho-1]$. Hence $u=q^{n-\rho}-q^{j}\in \cI_{n-\rho}$.
      Next assume that $$\wt_q(t)=(n-\rho)(q-1)-1=n(q-1)-r.$$ By~\eqref{t_2adic_expression} there exist two cases, one of which is $t_j,t_k,j\ne k,j,k\in [0,n-\rho-1]$, are equal to $q-2$, another is $t_l,l\in [0,n-\rho-1]$, is equal to $q-3$ (when $q\geqslant3$). In either case, it is not hard to verify that
	$$
	|O(t)-E(t)| =
	\begin{cases}
		q,~q-2 \mbox{ or } |q-4|& \textup{if $\rho$ is odd},\\
		1 \mbox{ or } 3& \textup{if $\rho$ is even}.
	\end{cases}
	$$
	Therefore $\alpha^t\in Z_{r,I}$.
	If $\wt_q(\delta+u)\leqslant n(q-1)-r-1$, then $\alpha^{t}\in Z_{r,I}$.
  Combining the above arguments, we conclude that if $u\notin\cI_{n-\rho}$,
  then $\alpha^{\delta+u}\in Z_{r,I}$. Hence $\Lambda_{\delta+u}=0$.

  Now we write the first $2\delta-1$ Newton identities, taking into account the condition $\{\Lambda_i=0,i\in [0,\delta-1]\}$.
	\begin{align*}
		I_1&: \Lambda_{\delta+1}+\Lambda_{\delta}\sigma_1=0 \\
		I_2&: \Lambda_{\delta+2}+\Lambda_{\delta+1}\sigma_1+\Lambda_{\delta}\sigma_2=0\\
		&\cdots\\
		I_u&: \Lambda_{\delta+u}+\sum_{i=1}^{u}\Lambda_{\delta+u-i}\sigma_i=0\\
		&\cdots\\
		I_{2\delta-1}&: \Lambda_{2\delta-1}+\Lambda_{2\delta-2}\sigma_1+\cdots+\Lambda_{\delta}\sigma_{\delta-1}=0.
	\end{align*}
  We now proceed by induction on $u$. Since $1\notin \cI_{n-\rho}$,
  then $\Lambda_{\delta+1}=0$ and $I_1$ gives $\sigma_1=0$ (since $\Lambda_{\delta}\ne 0$).
  Now we assume that for $s\in [1,u-1]$, if $s\notin \cI_{n-\rho}$ then $\sigma_s=0$
  else $\sigma_s=-\Lambda_{\delta+s}/\Lambda_{\delta}$.
  Consider every term $\Lambda_{\delta+u-s}\sigma_s, s<u$, of the identity $I_u$.
	By the assumption we obtain
	\begin{enumerate}
		\item
		If $s\notin \cI_{n-\rho}$ then $\sigma_s=0$.
		\item
		If $s\in \cI_{n-\rho}$ then $s=q^{n-\rho}-q^j$, where $j\in [0,n-\rho-1]$.
    As $s<u<q^{n-\rho}-1$,
    $$u-s<q^{n-\rho}-1-q^{n-\rho}+q^j=q^j-1<(q-1)q^{n-\rho-1}.$$
    Hence, $u-s\notin \cI_{n-\rho}$ implies $\Lambda_{\delta+u-s}=0$.
	\end{enumerate}

	So the identity $I_u$ is in fact $\Lambda_{\delta+u}+\Lambda_{\delta}\sigma_u=0$.
  If $u\notin \cI_{n-\rho}$ then $\Lambda_{\delta+u}=0$ and $\sigma_u=0$,
  otherwise $\sigma_u=-\Lambda_{\delta+u}/\Lambda_{\delta}$, which proves (ii).

	(iii) This follows directly from (ii).
\end{proof}

Now we are ready to show that the minimum vectors of $\cC_q(r,I,n)^{*}$
are exactly those of $\cR_q (r-1,n)^{*}$.

\begin{theorem}\label{thm_min_vec_cI}
  Let $r=\rho(q-1)+1,\ 0\leqslant\rho\leqslant n-1$ and assume that
  $$ 
  \begin{cases}
  	\{q, q-2, |q-4|\} \cap I=\emptyset & \textup{if $\rho$ is odd},\\
  	\{1, 3\} \cap I=\emptyset & \textup{if $\rho$ is even}.
  \end{cases}
  $$
  Then the minimum vectors of $\cC_q(r,I,n)^{*}$ are the codewords
  $$x=\sum_{g\in V_{n-\rho}^{*}}\lambda(g),~\lambda\in \FF_q^*$$
  where $V_{n-\rho}$ is any subspace of $\FF_{q^{n}}$ of dimension $n-\rho$ over $\FF_q$ and $V_{n-\rho}^*=V_{n-\rho}\backslash\{0\}$.
\end{theorem}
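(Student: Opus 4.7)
The plan is to combine Proposition~\ref{Locator polynomial} (which already pins down the shape of the locator polynomial of every weight-$\delta$ codeword) with Proposition~\ref{THE ROOT} (which identifies when such a polynomial comes from an affine subspace), and then verify both inclusions.

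First I would argue the forward direction. Let $x \in \cC_q(r,I,n)^{*}$ be a minimum vector, so $\wt(x)=\delta = q^{n-\rho}-1$ by Theorem~\ref{thm_distance}. By Proposition~\ref{Locator polynomial}(iii), its locator polynomial is
\begin{equation*}
\sigma_x(X) \;=\; 1 \;-\; \sum_{u\in \cI_{n-\rho}} \frac{\Lambda_{\delta+u}}{\Lambda_{\delta}}\, X^{u},
\end{equation*}
where $\cI_{n-\rho}=\{q^{n-\rho}-q^{j}\mid j\in[0,n-\rho-1]\}$. This is exactly the shape of the polynomial appearing in Proposition~\ref{THE ROOT} with parameter $k=n-\rho$ (and $\delta=q^{k}-1$), and the leading coefficient $\sigma_{\delta}=(-1)^{\delta}\prod_{i=1}^{\delta} g_i$ is nonzero because every locator $g_i\in\FF_{q^n}^{*}$ is nonzero.

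Next I would check that $\sigma_x(X)$ splits in $\FF_{q^n}$ with roots of multiplicity one. By construction, the roots of $\sigma_x(X)$ are the inverses $g_i^{-1}$ of the locators of $x$; there are exactly $w=\delta$ of them, they lie in $\FF_{q^n}^{*}$, and they are pairwise distinct. Hence condition (i) of Proposition~\ref{THE ROOT} is fulfilled, so condition (ii) holds: there exists a subspace $V_{n-\rho}\subseteq \FF_{q^n}$ of dimension $n-\rho$ over $\FF_q$ such that the locators of $x$ are precisely $V_{n-\rho}\setminus\{0\}$ and $h=0$. Recalling that $x\in\cM$ has support equal to its set of locators, we conclude
\begin{equation*}
x \;=\; \lambda \sum_{g\in V_{n-\rho}^{*}}(g) \qquad \text{for some } \lambda \in \FF_q^{*}.
\end{equation*}

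For the converse, observe that by Remark~\ref{CIdetails}(a) we have $\cR_q(r-1,n)^{*} \subseteq \cC_q(r,I,n)^{*}$. Since $r-1=\rho(q-1)$ (i.e.\ the case $s=0$ of Theorem~\ref{RM_min_vectors}), every codeword of the form $\lambda\sum_{g\in V_{n-\rho}^{*}}(g)$ with $V_{n-\rho}$ a $(n-\rho)$-dimensional $\FF_q$-subspace of $\FF_{q^n}$ is a minimum vector of $\cR_q(r-1,n)^{*}$ and therefore belongs to $\cC_q(r,I,n)^{*}$; its weight $q^{n-\rho}-1 = \delta$ matches the minimum distance of $\cC_q(r,I,n)^{*}$ computed via Theorem~\ref{thm_distance}, so it is a minimum vector of $\cC_q(r,I,n)^{*}$ as well.

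The only genuinely non-routine ingredient is Proposition~\ref{Locator polynomial}, whose proof already did the heavy lifting of ruling out every coefficient $\sigma_u$ with $u\notin\cI_{n-\rho}$ using the hypothesis on $I$. Given that input, the present theorem reduces to a clean application of Proposition~\ref{THE ROOT}; the main thing to be careful about is to cite the correct value of $k$ (namely $k=n-\rho$) and to confirm that the splitting/multiplicity-one hypothesis of that proposition is automatic from the fact that $x$ is a weight-$\delta$ codeword in $\cM$ (so that its $\delta$ locators are distinct elements of $\FF_{q^n}^{*}$).
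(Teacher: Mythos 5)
Your backward (converse) direction matches the paper exactly, and your application of Proposition~\ref{Locator polynomial} together with Proposition~\ref{THE ROOT} in the forward direction is the right idea. But there is a gap in the last step of the forward direction: Proposition~\ref{THE ROOT} only tells you that the \emph{support} of $x$ equals $V_{n-\rho}^{*}$; the locator polynomial sees only which coordinates are nonzero, not what the nonzero values are. So at that point you may only write $x=\sum_{g\in V_{n-\rho}^{*}}\lambda_g(g)$ with coefficients $\lambda_g\in\FF_q^{*}$ possibly depending on $g$, and you still need an argument that all $\lambda_g$ are equal. This is automatic only when $q=2$; for $q>2$ it is a genuine claim that must be proved.

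The paper closes this gap with a short additional argument that you should supply: since $y:=\sum_{g\in V_{n-\rho}^{*}}(g)$ lies in $\cR_q(r-1,n)^{*}\subseteq\cC_q(r,I,n)^{*}$ and has weight $\delta$, if the $\lambda_g$ were not all equal then, picking $\lambda$ equal to one of the $\lambda_g$, the codeword $x-\lambda y\in\cC_q(r,I,n)^{*}$ would be nonzero of weight strictly less than $\delta$, contradicting the minimum distance. Adding this observation makes your proof complete and essentially identical to the paper's.
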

\begin{proof}
%  Fix $$I_0=
%  \begin{cases}
%  	M_r\setminus\{q, q-2, |q-4|\} & \textup{if $\rho$ is odd},\\
%  	M_r\setminus\{1, 3\} & \textup{if $\rho$ is even},
%  \end{cases}
%  $$ and 
Suppose that $x$ is a codeword of weight $\delta=q^{n-\rho}-1$ of $\cC_q(r,I,n)^*$.
  By Proposition~\ref{Locator polynomial} we know that its locator polynomial is
  $$\sigma(X)=1-\sum_{u\in\cI_{n-\rho}} \frac{\Lambda_{\delta+u}}{\Lambda_{\delta}}X^u.$$
  Then from Proposition~\ref{THE ROOT}, the roots of $\sigma(X)$ are the nonzero elements of some subspace $V_{n-\rho}$ of $\FF_{q^{n}}$ of dimension $n-\rho$.  
  Therefore 
  $$x=\sum_{g\in V_{n-\rho}^{*}}\lambda_g(g),~\lambda_g\in \FF_q^*.$$
  On the other hand, from Theorem~\ref{RM_min_vectors} we know that the codeword $\sum_{g\in V_{n-\rho}^{*}}(g)\in \cR_q (r-1,n)^{*}$,
  so it is also a minimum vector of $\cC_q(r,I,n)^*$. 
  This implies that $\lambda_g=\lambda$ for some $\lambda\in\FF_q^*$ and all $g\in V_{n-\rho}^{*}$, otherwise this
  would  contradict to the minimum distance of $\cC_q(r,I,n)^*$.
  So
  $$x=\sum_{g\in V_{n-\rho}^{*}}\lambda(g),~\lambda\in \FF_q^*.$$
 % Next let $I\subset M_r$ be any subset such that $$
 % \begin{cases}
 % 	\{q, q-2, |q-4|\} \cap I=\emptyset & \textup{if $\rho$ is odd},\\
 % 	\{1, 3\} \cap I=\emptyset & \textup{if $\rho$ is even}.
 % \end{cases}
 % $$
 % Then $I\subset I_0$, and by (a) and (d) of Remark~\ref{CIdetails} we know that
 % $$\cR_q (r-1,n)^{*} \subseteq \cC_q(r,I,n)^{*} \subseteq \cC_q(r,I_0,n)^{*}$$
 This concludes the proof.
\end{proof}

\begin{theorem}
  Let $r=\rho(q-1)+1,\ 0\leqslant\rho\leqslant n-1$ and assume that
  $$
  \begin{cases}
  	\{q, q-2, |q-4|\} \cap I=\emptyset & \textup{if $\rho$ is odd},\\
  	\{1, 3\} \cap I=\emptyset & \textup{if $\rho$ is even}.
  \end{cases}
  $$
  Then the minimum vectors of ${\cC_q}(r,I,n)$ in the algebra $\mathcal{A}$
  are precisely the codewords
  $$x=\lambda \sum_{g\in V_{n-\rho}}(g+h),~\lambda\in\FF_q^*$$
  where $V_{n-\rho}$ is any subspace of $\FF_{q^{n}}$ of dimension $n-\rho$ over $\FF_q$ and $h\in\FF_{q^{n}}$.

\end{theorem}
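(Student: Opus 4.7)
The plan is to reduce this extended version to Theorem~\ref{thm_min_vec_cI} by exploiting the affine invariance of $\cC_q(r,I,n)$ established in Theorem~\ref{affine_invariant_extended}. The inclusion ``$\supseteq$'' is immediate: any codeword of the form $x=\lambda\sum_{g\in V_{n-\rho}}(g+h)$ lies in $\cR_q(r-1,n)\subseteq\cC_q(r,I,n)$ by Theorem~\ref{RM_min_vectors} and Remark~\ref{CIdetails}(a), and its Hamming weight $q^{n-\rho}$ equals the minimum distance of $\cC_q(r,I,n)$ recorded at the start of this section.

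For the converse, I would take an arbitrary minimum vector $x\in\cC_q(r,I,n)$ of weight $q^{n-\rho}$. Choose any element $g_0$ in the support of $x$ and apply the affine permutation $\sigma_{1,-g_0}$. By Theorem~\ref{affine_invariant_extended}, the image $\tilde{x}:=\sigma_{1,-g_0}(x)$ still lies in $\cC_q(r,I,n)$, and by construction $0$ belongs to its support. Write $\tilde{x}=\mu(0)+\tilde{x}'$ with $\mu\in\FF_q^*$ and $\tilde{x}'$ supported on $\FF_{q^n}^*$. Then $\tilde{x}'$ is the corresponding unextended codeword and belongs to $\cC_q(r,I,n)^*$ with Hamming weight $q^{n-\rho}-1$, which is precisely $\dist(\cC_q(r,I,n)^*)$. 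Theorem~\ref{thm_min_vec_cI} therefore supplies an $\FF_q$-subspace $V_{n-\rho}\subseteq\FF_{q^n}$ of dimension $n-\rho$ and a scalar $\lambda\in\FF_q^*$ with $\tilde{x}'=\sum_{g\in V_{n-\rho}^*}\lambda(g)$.

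It then remains to pin down $\mu$. By the definition of the extended code, $\mu=-\sum_{g\in V_{n-\rho}^*}\lambda=-\lambda(q^{n-\rho}-1)$; since $q$ is a power of the characteristic $p$, $q^{n-\rho}\equiv 0\pmod p$ and hence $\mu=\lambda$. Thus $\tilde{x}=\lambda\sum_{g\in V_{n-\rho}}(g)$, and applying the inverse translation $\sigma_{1,g_0}$ yields $x=\lambda\sum_{g\in V_{n-\rho}}(g+g_0)$, which is the advertised form with $h=g_0$.

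The argument is essentially bookkeeping once the punctured case is available: the two things to be careful about are that the punctured weight $q^{n-\rho}-1$ indeed equals $\dist(\cC_q(r,I,n)^*)$ so that Theorem~\ref{thm_min_vec_cI} applies, and that the $p$-adic identity $q^{n-\rho}-1\equiv -1\pmod p$ forces the extension coefficient $\mu$ to coincide with $\lambda$. Affine invariance does all of the real work, converting the linear-subspace description of the punctured minimum vectors into the affine-subspace description in the extended algebra $\cA$.
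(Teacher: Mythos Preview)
Your proposal is correct and follows essentially the same approach as the paper: the paper's proof is the two-line remark that affine invariance (Theorem~\ref{affine_invariant_extended}) combined with the punctured result (Theorem~\ref{thm_min_vec_cI}) yields the claim, and you have simply unpacked what that combination means---translating a minimum vector so that $0$ lies in its support, puncturing, applying Theorem~\ref{thm_min_vec_cI}, checking the extension symbol, and translating back. The two points you flag (that $\dist(\cC_q(r,I,n)^*)=q^{n-\rho}-1$, and that $\mu=\lambda$ via $q^{n-\rho}\equiv 0\pmod p$) are exactly the details the paper leaves implicit.
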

\begin{proof}
  Note that ${\cC_q}(r,I,n)$ is affine-invariant by Theorem~\ref{affine_invariant_extended}.
  Combining this with Theorem~\ref{thm_min_vec_cI} we obtain the result.
\end{proof}

\subsection{Case (ii): $r=\rho(q-1)$, $\rho$ is even and $\{0,2\}\cap I=\{0\}$}
In this subsection we assume that $\rho=2\rho'$ is even, so, $r=\rho(q-1)$ is even.

\begin{lemma}\label{min_vec_FF_4}
	Let $n=2m,\ r=\rho(q-1),\ 0\leqslant\rho\leqslant n-1$, $\rho=2\rho'$ and $0\in I$. 
  Let $V_{m-\rho'}^{(2)}$ be a subspace of $\FF_{q^n}$ with dimension $m-\rho'$ over $\FF_{q^2}$.
	Let $x=\sum_{g\in V_{m-\rho'}^{(2)}\backslash\{0\}}\lambda(g),\lambda\in\FF_q^*$, then $x\in\cC_q(r,I,n)^{*}$.
\end{lemma}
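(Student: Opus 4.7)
The plan is to verify directly that $\rho_s(x)=0$ for every $s\in[1,N-1]$ with $\alpha^s\in Z_{r,I}$. Setting $W:=V_{m-\rho'}^{(2)}$, we have
$$\rho_s(x)=\lambda\sum_{g\in W\setminus\{0\}}g^s=\lambda\sum_{g\in W}g^s\quad\text{for }s\geqslant 1,$$
so the task reduces to proving that the power sum $\sum_{g\in W}g^s$ vanishes on the defining set. The key observation is that $W$ carries two structures simultaneously: it is an $\FF_{q^2}$-subspace of $\FF_{q^n}$ of dimension $m-\rho'$ and, consequently, an $\FF_q$-subspace of dimension $2(m-\rho')=n-\rho$. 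I would then apply Theorem~\ref{PowerSums} twice, once over $\FF_q$ and once over $\FF_{q^2}$ (using $\FF_{q^n}=\FF_{(q^2)^m}$ so the theorem applies uniformly).

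For $\alpha^s\in Z_r$, we have $\wt_q(s)\leqslant n(q-1)-r-1<(n-\rho)(q-1)$, so Theorem~\ref{PowerSums} over $\FF_q$ gives $\sum_{g\in W}g^s=0$ at once. The delicate case is $\alpha^s\in\bigcup_{k\in\overline{I}}\Theta_k^{(r)}$, where $\wt_q(s)=n(q-1)-r=2(m-\rho')(q-1)$ and $|O(s)-E(s)|=k$. Since $0\in I$ forces $0\notin\overline{I}$, and $k\in M_r=M_+$ is even, we have $k\geqslant 2$. Pairing up base-$q$ digits into base-$q^2$ digits via $s=\sum_{j=0}^{m-1}(s_{2j}+s_{2j+1}q)(q^2)^j$ yields the identity $\wt_{q^2}(s)=E(s)+q\cdot O(s)$. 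If $E(s)-O(s)=k$, substitution gives
$$\wt_{q^2}(s)=(m-\rho')(q^2-1)-\tfrac{(q-1)k}{2}<(m-\rho')(q^2-1),$$
so Theorem~\ref{PowerSums} applied to the $\FF_{q^2}$-subspace $W\subset\FF_{(q^2)^m}$ of dimension $m-\rho'$ forces $\sum_{g\in W}g^s=0$.

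The remaining sub-case $O(s)-E(s)=k>0$ is handled by Galois conjugation: since $n=2m$ is even, the cyclic shift of base-$q$ digits induced by $s\mapsto s':=sq\bmod N$ swaps odd- and even-indexed positions, giving $E(s')-O(s')=k$, so $s'$ falls into the previous sub-case. Since Frobenius is additive in characteristic $p$ (with $q$ a power of $p$),
$$\Big(\sum_{g\in W}g^s\Big)^q=\sum_{g\in W}g^{sq}=\sum_{g\in W}g^{s'}=0,$$
from which $\sum_{g\in W}g^s=0$ follows.

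The main obstacle is exactly this second case $k\neq 0$: the $\FF_q$-version of Theorem~\ref{PowerSums} is tight at $\wt_q(s)=(n-\rho)(q-1)$ and no longer produces vanishing, so one must upgrade to the $\FF_{q^2}$-version, crucially exploiting the extra $\FF_{q^2}$-linear structure of $W$ that the hypothesis provides. The assumption $0\in I$ is essential: if $k=0$ were permitted, the calculation would give $\wt_{q^2}(s)=(m-\rho')(q^2-1)$ exactly, falling short of Theorem~\ref{PowerSums}'s hypothesis, and the power sum need not vanish.
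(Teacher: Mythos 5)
Your proof is correct and follows essentially the same route as the paper's: you reduce to vanishing of power sums on the defining set, apply Theorem~\ref{PowerSums} over $\FF_q$ for $\alpha^s\in Z_r$, then over $\FF_{q^2}$ (using $\wt_{q^2}(s)=E(s)+qO(s)$ and the $\FF_{q^2}$-structure of $W$) for the boundary case $\wt_q(s)=(n-\rho)(q-1)$, $k\geqslant 2$, and finally dispatch the opposite sign of $O-E$ by the Frobenius relation $\Lambda_{qs}=\Lambda_s^q$. Your explicit identity $\wt_{q^2}(s)=(m-\rho')(q^2-1)-(q-1)k/2$ is a clean quantification of the paper's inequality and nicely isolates why $0\in I$ (hence $k\neq 0$) is needed.
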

\begin{proof}
	By definition we know that the zero set of $\cC_q(r,I,n)^*$ is
	\begin{align*}
		Z_{r,I} =Z_r\bigcup \bigg(\bigcup_{k\in \overline{I}} \Theta ^{(r)}_{k}\bigg).
	\end{align*}
	Now we will prove that $\Lambda_{u}=0$ for all $\alpha^u\in Z_{r,I}$, which implies that
	$x\in\cC_q(r,I,n)^*$.

	Let $\alpha^u\in Z_{r,I}$.
	If $\wt_q(u)\leqslant n(q-1)-r-1=(n-\rho)(q-1)-1$, then by Theorem~\ref{PowerSums} and $V_{m-\rho'}^{(2)}$ is naturally a subspace of dimension $n-\rho$ over $\FF_q$,
	we have $\Lambda_{u}=\sum_{g\in V_{m-\rho'}^{(2)}\backslash\{0\}}g^{u}=0$.
	Now assume that $\wt_q(u)=n(q-1)-r=(n-\rho)(q-1)$ and $O(u)\neq E(u)$. Note that 
  $$\wt_{q^2}(u)=E(u)+qO(u)=\wt_q(u)+(q-1)O(u).$$
	If $E(u)>(m-\rho')(q-1)>O(u)$, we can readily check that $\wt_{q^2}(u)<(m-\rho')(q^2-1)$, then by Theorem~\ref{PowerSums} we have $\Lambda_{u}=0$.
	If $E(u)<(m-\rho')(q-1)<O(u)$, then similarly we have $\wt_{q^2}(qu)<(m-\rho')(q^2-1)$ and $\Lambda_{qu}=\Lambda_{u}^{q}=0$, so $\Lambda_{u}=0$.
	This concludes our proof.
\end{proof}

\begin{lemma}\label{min_vec_FF_4_not}
	Let $n=2m,\ r=\rho(q-1),\ 0\leqslant\rho\leqslant n-1$, $\rho=2\rho'$ and $\{0,2\}\cap I=\{0\}$. 
  Let $V_{n-\rho}$ be a subspace of $\FF_{q^{n}}$ of dimension $n-\rho$ over $\FF_q$
	but not a subspace of dimension $m-\rho'$ over $\FF_{q^2}$.
	Let $x=\sum_{g\in V_{n-\rho}^{*}}\lambda(g)$, $\lambda\in\FF_q^*$, then $x\notin\cC_q(r,I,n)^{*}$.
\end{lemma}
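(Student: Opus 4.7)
The plan is to argue by contradiction. Suppose $x = \lambda \sum_{g \in V_{n-\rho}^*}(g) \in \cC_q(r, I, n)^*$; I will deduce that $V_{n-\rho}$ must then be an $\FF_{q^2}$-subspace of $\FF_{q^n}$ of dimension $m - \rho'$, contradicting the hypothesis. Write $\delta = q^{n-\rho} - 1$. Since $V_{n-\rho}$ is an $\FF_q$-subspace of dimension $n-\rho$, Proposition~\ref{THE ROOT} forces the locator polynomial to have the sparse form $\sigma_x(X) = 1 + \sum_{j \in \cI_{n-\rho}} \sigma_j X^j$ with $\sigma_\delta \neq 0$, and Theorem~\ref{PowerSums} yields $\Lambda_u = 0$ for all $u \in [1, \delta-1]$. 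Newton identity $I_0$ then gives $\Lambda_\delta = -\sigma_\delta \Lambda_0 \neq 0$.

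The crux is to show that $\sigma_u = 0$ whenever $u = q^{n-\rho} - q^a$ with $a \in [1, n-\rho-1]$ odd. For such $u$ I would examine the Newton identity
\begin{equation*}
I_u:\quad \Lambda_{\delta + u} + \sum_{i=1}^{\delta} \sigma_i \Lambda_{\delta + u - i} = 0,
\end{equation*}
in which only indices $i \in \cI_{n-\rho}$ can contribute. For $i > u$, the shifted index $\delta + u - i$ lies in $[1, \delta - 1]$, so the corresponding $\Lambda$ vanishes. For $i = q^{n-\rho} - q^b$ with $b > a$ (hence $i < u$), a short $q$-adic borrowing computation gives $\delta + u - i = q^{n-\rho} + q^b - q^a - 1$ with $\wt_q = b(q-1) < (n-\rho)(q-1)$, so $\Lambda_{\delta+u-i} = 0$ by Theorem~\ref{PowerSums}. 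Thus $I_u$ collapses to $\Lambda_{\delta+u} + \sigma_u \Lambda_\delta = 0$. A parallel computation on $\delta + u = 2q^{n-\rho} - q^a - 1$ gives $\wt_q(\delta+u) = (n-\rho)(q-1)$, and reading off the digits at even versus odd positions (using that $n-\rho$ is even and $a$ is odd) yields $|E(\delta+u) - O(\delta+u)| = 2$. Hence $\alpha^{\delta+u} \in \Theta^{(r)}_2 \subseteq Z_{r, I}$ (since $2 \in \overline{I}$), so $\Lambda_{\delta+u} = 0$. Combined with $\Lambda_\delta \neq 0$ we conclude $\sigma_u = 0$.

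The only surviving terms in $\sigma_x$ are then at indices $q^{n-\rho} - q^{2k}$ for $k \in [0, m-\rho']$, whence the subspace polynomial
\begin{equation*}
L_{V_{n-\rho}}(X) = X^{q^{n-\rho}} \sigma_x(1/X) = X^{q^{n-\rho}} + \sum_{k=0}^{m-\rho'-1} \sigma_{q^{n-\rho}-q^{2k}}\, X^{q^{2k}}
\end{equation*}
is $\FF_{q^2}$-linearized. Its kernel is exactly $V_{n-\rho}$, which is therefore an $\FF_{q^2}$-subspace of dimension $m - \rho'$, contradicting the hypothesis. The main obstacle is the pair of $q$-adic digit calculations sketched above; although elementary, they are sensitive to the borrow/carry pattern, which differs slightly between $q = 2$ (where the digit at position $a$ becomes $0$ and the top carry lands at position $n-\rho+1$) and $q \geq 3$ (where it becomes $q-2$ and the carried $1$ lands at position $n-\rho$). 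One must verify in both regimes that the cross-term weights are sub-threshold and that the parity invariant equals exactly $2$.
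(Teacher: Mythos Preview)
Your proof is correct and is essentially the contrapositive of the paper's argument, with the same key computation that $\delta+u$ (for $u=q^{n-\rho}-q^a$, $a$ odd) has $q$-weight $(n-\rho)(q-1)$ and $|O-E|=2$. The paper shortcuts your Newton-identity reduction by citing Theorem~\ref{RM_min_vectors} directly---since $x$ is already a minimum-weight codeword of $\cR_q(r,n)^*$, that theorem immediately yields $\sigma_u=-\Lambda_{\delta+u}/\Lambda_\delta$ for $u\in\cI_{n-\rho}$---and then argues in the forward direction: ``not an $\FF_{q^2}$-subspace $\Rightarrow$ some odd $\gamma_j\neq 0 \Rightarrow \Lambda_{\delta+q^{n-\rho}-q^j}\neq 0$, contradicting $\alpha^{\delta+q^{n-\rho}-q^j}\in Z_{r,I}$.''
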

\begin{proof}
	Consider the polynomial
	\[
	f(X)=X^{q^{n-\rho}}\sigma(X^{-1})=X^{q^{n-\rho}}+\sum_{j=0}^{n-\rho-1}\gamma_{j} X^{q^j}
	\]
	where $\sigma(X)$ is the locator polynomial of $x$.
	As $V_{n-\rho}$ is not a subspace of dimension $m-\rho'$ over $\FF_{q^2}$,
	there must exist some odd  $j\in\{1,3,5,\cdots,n-\rho-1\}$ such that $\gamma_{j}\ne 0$.
	On the other hand, note that $\sigma_{q^{n-\rho}-q^{j}}=\gamma_{j}\ne 0$.

    Now by Theorem~\ref{RM_min_vectors}, we get
	$\sigma_{q^{n-\rho}-q^{j}}=-\Lambda_{\delta+q^{n-\rho}-q^{j}}/\Lambda_{\delta}\ne 0$.
	As $\Lambda_{\delta}\ne 0$, we have $\Lambda_{\delta+q^{n-\rho}-q^{j}}\ne 0$, where $\delta=q^{n-\rho}-1$.
	We can readily check that $\wt_q(\delta+q^{n-\rho}-q^{j})=(n-\rho)(q-1)$,
	and $|O(\delta+q^{n-\rho}-q^{j})-E(\delta+q^{n-\rho}-q^{j})|=2$.
	As $2\notin I$, we have $\delta+q^{n-\rho}-q^{j}\in Z_{r,I}$
	and $\Lambda_{\delta+q^{n-\rho}-q^{j}}=0$, which is a contradiction.
	Therefore $x\notin \cC_q(r,I,n)^{*}$.
\end{proof}

\begin{theorem}\label{thm_min_vectors}
	Let $n=2m,\ r=\rho(q-1),\ 0\leqslant\rho\leqslant n-1$, $\rho=2\rho'$ and $\{0,2\}\cap I=\{0\}$, 
  the minimum vectors of ${\cC_q}(r,I,n)^{*}$ are the codewords
	$$x=\sum_{g\in V_{m-\rho'}^{(2)}\backslash\{0\}}\lambda(g),~\lambda\in \FF_q^*$$
	where $V_{m-\rho'}^{(2)}$ is any subspace of $\FF_{q^{n}}$ of dimension $m-\rho'$ over $\FF_{q^2}$.
\end{theorem}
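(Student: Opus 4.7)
The plan is to combine Lemmas~\ref{min_vec_FF_4} and~\ref{min_vec_FF_4_not} with Theorem~\ref{RM_min_vectors} on the minimum vectors of generalized Reed--Muller codes. First I would pin down the minimum distance. Writing $r=\rho(q-1)$ corresponds to $s=0$ in Theorem~\ref{RM_min_dist}, so $\dist(\cR_q(r,n)^*)=q^{n-\rho}-1$. By Remark~\ref{CIdetails}(a), the inclusion $\cC_q(r,I,n)^*\subseteq\cR_q(r,n)^*$ yields $\dist(\cC_q(r,I,n)^*)\geqslant q^{n-\rho}-1$, while Lemma~\ref{min_vec_FF_4} exhibits a codeword in $\cC_q(r,I,n)^*$ of weight $|V_{m-\rho'}^{(2)}|-1=q^{2(m-\rho')}-1=q^{n-\rho}-1$. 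Hence $\dist(\cC_q(r,I,n)^*)=q^{n-\rho}-1$, and every minimum vector of $\cC_q(r,I,n)^*$ is simultaneously a minimum vector of the ambient code $\cR_q(r,n)^*$.

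Next I would take an arbitrary minimum vector $x$ of $\cC_q(r,I,n)^*$. Since the parameter $s$ equals $0$, Theorem~\ref{RM_min_vectors} applies and forces
$$x=\lambda\sum_{g\in V_{n-\rho}^*}(g),\qquad \lambda\in\FF_q^*,$$
for some $\FF_q$-subspace $V_{n-\rho}\subseteq\FF_{q^n}$ of dimension $n-\rho$. The remaining task is to upgrade $V_{n-\rho}$ to an $\FF_{q^2}$-subspace of dimension $m-\rho'$. This is precisely the contrapositive of Lemma~\ref{min_vec_FF_4_not}: if $V_{n-\rho}$ were not an $\FF_{q^2}$-subspace of dimension $m-\rho'$, then $x$ would not lie in $\cC_q(r,I,n)^*$, contradicting the choice of $x$. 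The reverse inclusion among minimum vectors is supplied directly by Lemma~\ref{min_vec_FF_4}, since the weight $q^{n-\rho}-1$ of those codewords coincides with $\dist(\cC_q(r,I,n)^*)$.

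The substantive work has already been dispatched in the two preceding lemmas: Lemma~\ref{min_vec_FF_4} relies on the power-sum vanishing of Theorem~\ref{PowerSums} together with the observation that an $(m-\rho')$-dimensional $\FF_{q^2}$-subspace is in particular an $(n-\rho)$-dimensional $\FF_q$-subspace, while Lemma~\ref{min_vec_FF_4_not} is driven by a Newton-identities argument pinpointing that if some $\gamma_j\ne 0$ for an odd $j$ then $\Lambda_{\delta+q^{n-\rho}-q^j}$ must be nonzero, contradicting $2\notin I$. The only mildly delicate step at this stage is sanity-checking the hypotheses of Theorem~\ref{RM_min_vectors} (namely $s=0$ under $r=\rho(q-1)$) and verifying that the minimum distances of $\cC_q(r,I,n)^*$ and $\cR_q(r,n)^*$ coincide; this is what pins every minimum vector of the smaller code to the classical form $\lambda\sum_{g\in V_{n-\rho}^*}(g)$, after which Lemma~\ref{min_vec_FF_4_not} converts the $\FF_q$-structure into the desired $\FF_{q^2}$-structure.
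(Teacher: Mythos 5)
Your proof is correct and follows the same structure as the paper's: pin down the minimum distance, apply Theorem~\ref{RM_min_vectors} (with $s=0$) to write any minimum vector of $\cC_q(r,I,n)^*$ as $\lambda\sum_{g\in V_{n-\rho}^*}(g)$, use (the contrapositive of) Lemma~\ref{min_vec_FF_4_not} to force $V_{n-\rho}$ to carry $\FF_{q^2}$-structure, and invoke Lemma~\ref{min_vec_FF_4} for the reverse inclusion. One place where your write-up is actually tighter than the paper's: the paper attributes $\dist(\cC_q(r,I,n)^*)=q^{n-\rho}-1$ to Theorem~\ref{thm_distance}, but writing $r=(r-1)+1$ with $r-1=(\rho-1)(q-1)+(q-2)$ puts us in the odd-$\rho$ branch of that theorem with $\{q-1-s\pm1\}=\{0,2\}$, and that branch only addresses $\{0,2\}\cap I=\emptyset$ (equality) and $\{0,2\}\cap I=\{2\}$ (a lower bound), not the present hypothesis $\{0,2\}\cap I=\{0\}$. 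Your alternative route --- lower bound from the inclusion $\cC_q(r,I,n)^*\subseteq\cR_q(r,n)^*$ together with Theorem~\ref{RM_min_dist}, upper bound from the explicit weight-$(q^{n-\rho}-1)$ codeword supplied by Lemma~\ref{min_vec_FF_4} --- sidesteps that gap cleanly and is in fact what the underlying BCH argument of Theorem~\ref{thm_distance} reduces to in this case.
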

\begin{proof}
	By Theorem~\ref{thm_distance} and Lemma~\ref{min_vec_FF_4}
	we have $$\dist({\cC_q}(r,I,n)^{*})=\dist(\cR_q(r,n)^*)= q^{2(m-\rho')}-1.$$
	So the minimum vectors of ${\cC_q}(r,I,n)^{*}$ are that of $\cR_q(r,n)^*$.
	Now by Lemma~\ref{min_vec_FF_4_not}, we can get that the minimum vectors of ${\cC_q}(r,I,n)^{*}$ are the codewords
	$$x=\sum_{g\in V_{m-\rho'}^{(2)}\backslash\{0\}}\lambda(g),~\lambda\in\FF_q^*$$
	where $V_{m-\rho'}^{(2)}$ is any subspace of $\FF_{q^n}$ of dimension $m-\rho'$ over $\FF_{q^2}$.
	This implies the result.
\end{proof}

\begin{theorem}
	Let $n=2m,\ r=\rho(q-1),\ 0\leqslant\rho\leqslant n-1$, $\rho=2\rho'$ and $\{0,2\}\cap I=\{0\}$. The minimum vectors of ${\cC_q}(r,I,n)$
	in the algebra $\mathcal{A}$ are precisely the codewords
	$$x=\lambda \sum_{g\in V_{m-\rho'}^{(2)}}(g+h),~h\in \FF_{q^n},~\lambda\in\FF_q^*$$
	where $V_{m-\rho'}^{(2)}$ is any subspace of $\FF_{q^{n}}$ of dimension $m-\rho'$ over $\FF_{q^2}$.
\end{theorem}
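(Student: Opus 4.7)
The plan is to leverage two facts already available: the affine invariance of $\cC_q(r,I,n)$ (Theorem~\ref{affine_invariant_extended}) and the complete description of the minimum vectors of the punctured code $\cC_q(r,I,n)^{*}$ (Theorem~\ref{thm_min_vectors}). Since a minimum vector of $\cC_q(r,I,n)$ has weight $q^{n-\rho}$, its punctured version either has weight $q^{n-\rho}-1$ (if the $0$-coordinate is nonzero) or has weight $q^{n-\rho}$ with the $0$-coordinate equal to $0$. The second case can be reduced to the first via a translation $\sigma_{1,-h}$ for some $h$ in the support, so the crux is to handle the case where the $0$-coordinate is nonzero.

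First I would verify that the claimed codewords really lie in $\cC_q(r,I,n)$. For $h=0$, Lemma~\ref{min_vec_FF_4} shows that $y=\lambda\sum_{g\in V_{m-\rho'}^{(2)}\setminus\{0\}}(g)$ is a minimum vector of $\cC_q(r,I,n)^{*}$. The extended codeword attaches the symbol $-\sum_{g\in \FF_{q^n}^*}y_g=-\lambda(q^{n-\rho}-1)$ at position $0$. Because $q=p^l$ and $n-\rho\geqslant 1$, we have $q^{n-\rho}\equiv 0\pmod p$, so $-(q^{n-\rho}-1)\equiv 1$ in $\FF_q$ and the extension equals $\lambda\sum_{g\in V_{m-\rho'}^{(2)}}(g)$. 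For general $h\in\FF_{q^n}$, the element $\lambda\sum_{g\in V_{m-\rho'}^{(2)}}(g+h)$ is the image of this codeword under $\sigma_{1,h}$, and hence lies in $\cC_q(r,I,n)$ by affine invariance.

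For the converse, let $x$ be a minimum vector of $\cC_q(r,I,n)$. Pick any $h$ in the support of $x$ and set $y=\sigma_{1,-h}(x)$; by affine invariance, $y\in\cC_q(r,I,n)$, and by construction the coefficient of $(0)$ in $y$ equals $x_h\neq 0$. Puncturing $y$ at position $0$ yields a codeword of $\cC_q(r,I,n)^{*}$ of weight $q^{n-\rho}-1=\dist(\cC_q(r,I,n)^{*})$, so by Theorem~\ref{thm_min_vectors} it equals $\lambda\sum_{g\in V_{m-\rho'}^{(2)}\setminus\{0\}}(g)$ for some $\lambda\in\FF_q^*$ and some $\FF_{q^2}$-subspace $V_{m-\rho'}^{(2)}$ of $\FF_{q^n}$. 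The same computation as before forces the attached $0$-coordinate of $y$ to be $\lambda$, so $y=\lambda\sum_{g\in V_{m-\rho'}^{(2)}}(g)$ and therefore $x=\sigma_{1,h}(y)=\lambda\sum_{g\in V_{m-\rho'}^{(2)}}(g+h)$, as desired.

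The main obstacle is really only conceptual rather than computational: one must be careful that the translation trick $\sigma_{1,-h}$ is available (which it is, thanks to affine invariance) and that the $0$-coordinate of the extension matches the coefficient at $g=0$ in the $\FF_{q^2}$-subspace sum. The latter is a small but essential arithmetic check hinging on $q^{n-\rho}\equiv 0\pmod p$, without which the argument linking $\cC_q(r,I,n)^{*}$-minimum vectors to $\cC_q(r,I,n)$-minimum vectors via the single added coordinate would break down.
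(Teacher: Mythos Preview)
Your proposal is correct and follows exactly the route the paper takes: the paper's proof consists of the single sentence ``combine Theorem~\ref{affine_invariant_extended} (affine invariance) with Theorem~\ref{thm_min_vectors}'', and what you have written is precisely an explicit unpacking of that combination---translate a minimum vector so that $0$ lies in its support, puncture, apply Theorem~\ref{thm_min_vectors}, and then undo the translation. Your extra arithmetic check that the attached $0$-coordinate equals $\lambda$ (via $q^{n-\rho}\equiv 0\pmod p$) is a detail the paper leaves implicit but which is indeed needed to pin down the shape of $y$.
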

\begin{proof}
	We know that ${\cC_q}(r,I,n)$ is affine-invariant by Theorem~\ref{affine_invariant_extended}.
	Combining this with Theorem~\ref{thm_min_vectors} we obtain the result.
\end{proof}

\section{Concluding remarks}\label{sec_conclu}
In this paper we propose new families of extended cyclic codes sandwiched between generalized RM codes,
and explicitly determine their minimum vectors for some special cases.
Kudekar et al.~\cite{KKMPSU2017} proved that RM codes achieve capacity
on the BEC both under bit-MAP and block-MAP decoding, thus solving a long standing conjecture. Very recently, Reeves and Pfister~\cite{reeves2021reedmuller} prove that the family of binary RM codes 
achieves capacity on any BMS channel with respect to bit-error rate. 
As our codes are sandwiched between RM codes, it just follows directly that
same results also hold for our new codes, i.e., achieve capacity
on the BEC both under bit-MAP and block-MAP decoding, and achieve capacity on any BMS channel with respect to bit-error rate.
One intriguing problem is to find an $n$-variable polynomial representation for our new sandwiched RM codes. This will help us determine the minimum vectors of those universally strongly perfect lattices of~\cite{hu2020strongly}.

\bibliographystyle{IEEEtran}
\bibliography{ref}
\end{document}